\documentclass[12pt]{article}

\usepackage{cite}
\usepackage{fullpage,setspace,hyperref}

\usepackage{amssymb,amstext,amsmath,amsthm,amsfonts,amsmath,amscd,relsize,url,mathtools}
\usepackage{accents}
\usepackage{enumitem}
\onehalfspacing
\usepackage{bbm} 

\usepackage{color}
\usepackage{booktabs}

\allowdisplaybreaks
\numberwithin{equation}{section}

\newcommand{\be}{\begin{equation}}
\newcommand{\ee}{\end{equation}}
\newcommand{\bea}{\begin{eqnarray}}
\newcommand{\eea}{\end{eqnarray}}



\newcommand{\pa}{\partial}

\newcommand{\rd}{\mathrm{d}}

\newcommand{\Db}{\mathcal{D}}		
\newcommand{\DB}{D^\mathrm{B}}	
\newcommand{\DBt}{\tilde{D}^\mathrm{B}} 
\newcommand{\T}{\mathsf{T}}

\newcommand{\id}{\mathbbm{1}}

\newcommand{\n}{\nabla}					
\newcommand{\nc}{\nabla^c}			
\newcommand{\lc}{\mathring{\n}}		

\newcommand{\Lt}{{\tilde{L}}}
\newcommand{\Pt}{{\tilde{P}}}
\newcommand{\Hh}{\mathcal{H}}
\newcommand{\cF}{\mathcal{F}}
\newcommand{\HH}{\mathcal{H}}
\newcommand{\NN}{\mathcal{N}}
\newcommand{\OO}{\mathcal{O}}
\newcommand{\Th}{{\mathcal{T}}}
\newcommand{\PS}{\mathcal{P}}

\newcommand{\Tt}{\mathbb{T}}
\newcommand{\Lie}{\mathcal L}

\newcommand{\XX}{\mathfrak{X}}
\newcommand{\ap}{\alpha}
\newcommand{\bt}{\beta}

\newcommand{\brac}{[\ ,\ ]}
\newcommand{\bracd}{\bl\ ,\ \br}
\newcommand{\se}{\Gamma}
\newcommand{\bl}{[\![}
\newcommand{\br}{]\!]}
\newcommand{\la}{\langle}
\newcommand{\ra}{\rangle}
\def\to{\rightarrow}
\def\tl{\tilde}

\newcommand{\xt}{\tl{x}}
\newcommand{\yt}{\tl{y}}
\newcommand{\zt}{\tl{z}}
\newcommand{\cD}{\mathcal{D}}

\def\cyc{\sum_{\mathclap{(X,Y,Z)}}}

\newcommand{\nB}{\n^{\mathrm{B}}}

\newcommand{\OB}{\Omega^{\mathrm B}}

\newtheorem{Thm}{Theorem}
\newtheorem{Lem}{Lemma}
\newtheorem{Cor}{Corollary}
\newtheorem{Prop}{Proposition}
\newtheorem{Def}{Definition}
\newtheorem*{Ex}{Example}
\newtheorem{Rem}{Remark}


\newcommand{\qandq}{\quad\mathrm{and}\quad}

\DeclareMathOperator{\End}{End}

\begin{document}

\begin{titlepage}
\vfill

\begin{flushright}
LMU-ASC 37/18 \\
\end{flushright}

\vfill

\begin{center}
   \baselineskip=16pt
   	{\Large \bf A Unique Connection for Born Geometry}
   	\vskip 2cm
   	{\sc  Laurent Freidel$^*$\footnote{\tt lfreidel@perimeterinstitute.ca}, 
          Felix J. Rudolph$^{\dagger}$\footnote{\tt felix.rudolph@lmu.de}, 
          David Svoboda$^*$\footnote{\tt dsvoboda@perimeterinstitute.ca} }  
	\vskip .6cm
    {\small  \it 
        $*$ Perimeter Institute for Theoretical Physics, \\
          31 Caroline St. N.,  Waterloo ON, N2L 2Y5, Canada \\ \ \\
        $\dagger$ Arnold Sommerfeld Center for Theoretical Physics, Department f\"ur Physik, \\
          Ludwig-Maximilians-Universit\"at M\"unchen, Theresienstr. 37, 80333 M\"unchen, Germany }
	\vskip 2cm
\end{center}

\begin{abstract}
\noindent It has been known for a while that the effective geometrical description of compactified strings on $d$-dimensional target spaces implies a generalization of geometry with a doubling of the sets of tangent space directions. This generalized geometry involves an $O(d,d)$ pairing $\eta$ and an $O(2d)$ generalized metric $\HH$. 
More recently it has been shown that in order to include T-duality as an effective symmetry, the generalized geometry also needs to carry a phase space structure or more generally a para-Hermitian structure encoded into a skew-symmetric pairing  $\omega$. The consistency of string dynamics requires this geometry to satisfy a set of compatibility relations that form what we call a Born geometry.
In this work we prove an analogue of the fundamental theorem of Riemannian geometry for Born geometry. We show that there exists a unique connection which preserves the Born structure $(\eta,\omega,\HH)$ and which is torsionless in a generalized sense. This resolves a fundamental ambiguity that is present in the double field theory formulation of effective string dynamics.
\end{abstract}

\vfill

\setcounter{footnote}{0}
\end{titlepage}

\setcounter{tocdepth}{2}
\tableofcontents

\section{Introduction}

''Physics is geometry'' as the saying goes, expresses that  there has been a long and fruitful interplay between inventions of   new geometrical structures and the discovery of new  physical concepts. Indeed, quantum mechanics is intimately tied up with symplectic geometry, general relativity with Riemannian geometry, and the gauge principle with the geometry of principal bundles. The deeper reason behind the connection between physics and geometry is that each time a new geometrical concept was realized in mathematics, a new expression of the relativity principle was at play in fundamental physics. Unification of electric and magnetic, unification of wave and particle or unification of space and time always comes with a relativization of what was before understood as an absolute concept. The mathematical expression of such relativization is geometrical by essence and always reveals a new mathematical structure as the central element of the underlying geometry: the metric for gravity, the symplectic potential for quantum mechanics or the gauge group for Yang-Mills theory.

The key questions underlying our work are the following ones: What could be the new geometrical structure that encompasses the key ideas behind quantum gravity? What extension of the relativity principle does it correspond to? And is there a model that can guide us through the maze of new concepts?

\paragraph{Generalized geometry} In the recent years we have learned more about these questions. It is clear that we expect a form of geometrical unification that underlies the unification of matter and geometry. It is also clear that a well-defined mathematical candidate for such a geometrical unification has now emerged under the name of generalized geometry \cite{Dorfman,Courant-Weinstein, courant1990dirac, Loday, Liu-Weinstein, Roytenberg, Alekseev, Severa:2001qm, Hitchin:2004ut, Gualtieri:2003dx, Vaisman:2004msa, Hitchin:2005in, Gualtieri:2007bq, Xu,Severa:2017oew} (we refer the reader to \cite{Jurco:2016emw} for an enlightening review of generalized geometry and more complete references).

The reason behind generalized geometry attracting the attention of both mathematicians and physicists is its power to unify the description of various objects known from the usual differential geometry perspective. As we we will review later, the key new element of generalized geometry is the generalization of the concept of a differentiable structure encoded into the replacement of the usual Lie bracket by the more general Dorfman bracket \cite{Dorfman}. Alternatively this corresponds to the replacement of the concept of a Lie algebra by the more general notion of Leibniz algebra \cite{Loday}.

On the physical side the new relativity principle that is now being explored is called the principle of relative locality \cite{AmelinoCamelia:2011bm, AmelinoCamelia:2011pe,AmelinoCamelia:2008qg,Freidel:2013zga,Barcaroli:2015xda,Freidel:2016pls,Guerin:2018fja} which proposes to explore the new idea that spacetime is not an absolute concept. Instead it depends on the type of physical probes used, in particular their quantum number such as their momenta and energy. The geometrization of this idea is just starting to take place and naturally leads to the construction of spacetime as a Lagrangian subspace of a para-Hermitian manifold \cite{Cruceanu,Etayo:2004lja,Vaisman:2012ke,Freidel:2017yuv, Svoboda:2018rci, Chatzistavrakidis:2018ztm}. Geometrically, this structure requires the compatibility of an almost symplectic structure $\omega$ and an $O(d,d)$ metrical structure $\eta$.

\paragraph{Double field theory} The model that has guided us so far has been the effective description of string theory on compact target spaces. The fundamental question which has been recently explored more deeply is:  What is the geometry of string theory, especially in the regime where T-duality is manifest? At a deeper level this project can be understood as a quest towards the geometrization of the renormalization group flow of 2-dimensional field theories beyond the fundamental result of Friedan \cite{Friedan:1980jf} which obtains Einstein's equation as a fixed point equation for a non-linear $\sigma$-model on non-compact target spaces.

This has led to a set of new interrelated ideas and models. The most developed one being double field theory (DFT)\cite{Siegel:1993xq,Siegel:1993th, Alvarez:2000bh, Alvarez:2000bi, Hull:2004in, Ellwood:2006ya, Hull:2006va, Grana:2008yw, Hull:2009mi,Hohm:2010pp,Coimbra:2011nw,Aldazabal:2011nj,Hohm:2011si,Hohm:2012mf,Aldazabal:2013sca,Berman:2013uda, Cederwall:2014kxa} which postulates that the effective description of strings compactified on torii of radius $R$ is in terms of fields that live in a doubled space. The coordinates of this doubled space, $X^M=(x^\mu,\xt_\mu)$ split into ordinary spacetime coordinates $x^\mu$ which describe the effective IR geometry in the limit $R\to \infty$ and dual coordinates $\xt_\mu$ which describe the effective UV geometry in the limit $R\to 0$. The index structure on these coordinates implies that there is a fundamental duality pairing $\eta$ between space and dual space which has the structure group $O(d,d)$.

This doubled space is also naturally equipped with a so called generalized metric $\HH$ that includes in its fabric the target space metric $g$ and the Kalb-Ramond $B$-field. The structure group of this metric is $O(2d)$ if one considers only spacelike compactifications. One of the key results in this setting is the result by Hohm and Zwiebach \cite{Hohm:2011si, Hohm:2012mf} and Jeon, Lee and Park \cite{Jeon:2010rw, Jeon:2011cn} which states that the string equations of motion involving the dilaton, spacetime metric and Kalb-Ramond field can simply be written as the vanishing of a generalized Einstein tensor. This tensor is constructed in terms of a connection that preserves both $\eta$ and $\HH$.  A striking result that clearly hints towards the conjecture that generalized geometry captures the key elements of string geometry.

There are, however, two caveats to this picture that our work aims to adress. The first one is that in order for double field theory to function properly and admit the proper invariance under the extended symmetry transformations, one needs to choose a solution to the section condition (see e.g. \cite{Berman:2011jh} for an early general discussion of the section condition) that effectively projects the support of the doubled fields onto a Lagrangian subspace. One issue with the section condition is that it is a purely kinematical constraint that is added to the theory without any dynamical handle or geometrical depth. In our previous paper \cite{Freidel:2017yuv} we have shown how this could be resolved by constructing the Dorfman bracket and its projection in terms of the para-Hermitian structure $(\eta,\omega)$. This description also allowed us to give a clear connection between the framework of double field theory that aims to  obtained spacetime as a projection of the doubled space and the mathematical setting of generalized geometry which assumes a pre-existing spacetime with an extended tangent space structure. Once again, the para-Hermitian geometrical structure has proven essential to connect these two different settings.

The second caveat which is the original motivation behind our current project is the fact that although one can reconstruct the string equations as the vanishing of a Ricci tensor, there is no analogue of the fundamental theorem of Riemannian geometry. The fundamental theorem of Riemannian geometry states that on any pseudo-Riemannian manifold there is a unique torsion-free connection -- called the Levi-Civita connection -- which preserves the given metric. Such a theorem fails to be true in generalized geometry since there are many connections that are torsionless in a generalized sense while still preserving $\eta$ and $\HH$. Remarkably, this ambiguity is hidden from us at the one-loop level since it does not affect the construction of the Ricci tensor. The ambiguity disappears in the trace operation \cite{Hohm:2011si, Coimbra:2011ky, Jurco:2016emw}  if we demand also compatibility with the dilaton. While quite remarkable, this is still  deeply unsettling. One would expect that a proper understanding of the geometrical structure uniquely defines the geometry of its connection. Moreover, this is necessary if one wants to go beyond one loop as the Riemann tensor will necessarily be involved in higher order corrections to the string action.

\paragraph{Born geometry} The resolution of this puzzle comes from the simple idea that the reason one cannot fix the connection uniquely is because one geometrical structure in the characterization of the string geometry is missing. Moreover, the structure we are missing is the one needed in order to implement the principle of relativity. This implies that in addition to the metrics $(\eta,\HH)$ we need to include a para-Hermitian structure $(\eta,\omega)$ in the extended geometry which allows us to geometrically define spacetime as one of its Lagrangian subspaces (the one  in the kernel of $\eta-\omega$).

Remarkably, the same conclusion can be reached from the formulation of duality symmetric string theory. The duality symmetric formulation of string theory was first formulated independently by Duff and Tseytlin \cite{Duff:1989tf,Tseytlin:1990nb,Tseytlin:1990va} and further developed in many directions (see \cite{Szabo:2018hhh,Chatzistavrakidis:2018ztm,Berman:2007xn,Sfetsos:2009vt} for a sample and references therein). It is the source of the double field theory developments described earlier. The series of work \cite{Freidel:2013zga, Freidel:2014qna,Freidel:2015pka,Freidel:2015uug} on metastring theory developed further the formulation of string theory compatible with T-duality and deepened our geometrical understanding of string geometry. It was shown that the geometry of the T-dual string exactly requires a triple structure $(\eta,\omega,\HH)$ satisfying appropriate compatibility conditions. This new geometrical structure was called {\it Born geometry}. The new player in Born geometry compared to the geometry of double field theory is the presence of a two-form $\omega$ which promotes the double space to a phase space.

This was recently confirmed in a direct study of compact bosons which showed that indeed the coordinates $x$ and their duals $\tilde{x}$ form a canonical pair that becomes non-commuting at the quantum level \cite{Freidel:2017wst,Freidel:2017nhg}. At the semi-classical level the presence of this extra pre-symplectic structure allows us to define a para-Hermitian structure that structurally enters the definition of the Dorfman bracket and the mathematical definition of the section condition \cite{Freidel:new}.

In the current paper we show that Born geometry is exactly what is needed in order to resolve the ambiguity in the construction of the connection. We prove an analogue of the fundamental theorem of Riemannian geometry for Born geometry, showing that there exists a unique connection -- the Born connection -- which is torsionless in a generalized sense and which preserves $\eta$, $\omega$ and $\HH$. We also show that the structure group of Born geometry is the Lorentz group and that therefore the Born connection reduces to the Levi-Civita connection on its Lagrangian subspace.

The paper is organized as follows: Section 2 provides a concise summary of para-Hermitian geometry, in this section we also define the D-bracket, generalized torsion, the canonical connection and explain the relationship between para-Hermitian connections and D-brackets. In Section 3 we define the notion of Born geometry, introduce the three fundamental structures $(\eta,\omega,\HH)$ and describe their relations. We also prove that the structure group of Born Geometry is simply the Lorentz group. Section 4 forms the main body of this paper where we formulate and prove the main theorem giving the existence and uniqueness of the Born connection. In Section 5 we start exploring the relation between the Born connection and the connections used in DFT and generalized geometry. We prove that the Born connection naturally gives a generalized connection and show that it gives the Levi-Civita connection when projected onto its Lagrangian subspace. Lastly, the appendix contains the proofs of some of the statements in the main text.

\section{Summary of para-Hermitian Geometry}
\label{sec:paraHG}

Let us begin by recalling basic notions of para-Hermitian geometry, one of the building blocks of the Born geometry discussed in Section \ref{sec:borngeo}. A para-Hermitian manifold can be thought of as the real analogue of a Hermitian manifold. The backbone of a para-Hermitian manifold is a para-complex manifold:
\begin{Def}
An almost para-complex manifold is a pair $(\PS,K)$, where $\PS$ is an $2d$-dimen-sional differential manifold and $K\in \End\, T\PS$ such that $K^2=\id$ and the $\pm 1$-eigenbundles of $K$ have the same rank. We say $K$ is integrable and call $(\PS,K)$ a para-complex manifold if the eigenbundles of $K$ are both integrable distributions.
\end{Def}
\begin{Rem}
In the following, we denote the $+1$ and $-1$-eigenbundles of $K$ by $L$ and $\Lt$, respectively. These eigenbundles come with associated projection operators given by 
\begin{align}\label{eq:K-projections}
P=\frac{1}{2}(\id+K) \quad \mathit{and} \quad \Pt=\frac{1}{2}(\id-K).
\end{align}
\end{Rem}

The integrability of the para-complex structure $K$ can be expressed in terms of the Nijenhuis tensor analogous to the complex case:
\begin{Def}
Let $(\PS,K)$ be an almost para-complex manifold. The Nijenhuis tensor associated to $K$ is given by
\begin{align}\label{eq:nijenhuis}
\begin{aligned}
N_K(X,Y)&\coloneqq\frac{1}{4}\Big( [X,Y]+[KX,KY]-K([KX,Y]+[X,KY])\Big)\\
&=\frac{1}{4}\Big( (\n_{KX}K)Y+(\n_XK)KY-(\n_{KY}K)X-(\n_YK)KX\Big)
\\
&=P[\Pt X,\Pt Y]+\Pt[P X,P Y],
\end{aligned}
\end{align}
where $\n$ is any torsionless connection and $[\ , \ ]$ the Lie bracket on $T\PS$.
\end{Def}

One can see that $K$ is integrable if and only if $N_K$ vanishes. An important difference from almost complex manifolds is that the integrability of $L$  is independent from the integrability of $\Lt$: $L$ can be integrable when $\Lt$ is not and vice versa. This leads to the notion of half-integrability: we say an almost para-complex manifold $(\PS,K)$ is $L$-integrable ($\Lt$-integrable) if $L$ ($\Lt$) is an integrable distribution. The half-integrability of $L$ and $\Lt$ is governed by the {\it projected Nijenhuis tensors}
\begin{align}
\begin{aligned}
N_K^P(X,Y)&\coloneqq N_K(PX,PY)=\Pt[PX,PY],\\
N_K^\Pt(X,Y)&\coloneqq N_K(\Pt X,\Pt Y)=P[\Pt X,\Pt Y].
\end{aligned}
\end{align}

\begin{Rem}
Whenever the integrability of $K$ is not relevant to the discussion, we shall refer to any almost para-complex structure simply as para-complex, while emphasizing (non-)in-tegrability of $K$ using the names almost para-complex and integrable para-complex structure whenever a confusion is possible.
\end{Rem}

Now adding a metric compatible with the para-complex structure yields a para-Hermitian structure on the manifold $\PS$.
\begin{Def}
A para-Hermitian manifold is a triple $(\PS,K,\eta)$, where $(\PS,K)$ is a para-complex manifold and $\eta$ is a pseudo-Riemannian metric such that $K$ is skew-orthogonal with respect to $\eta$, i.e. 
\begin{equation}
\eta(KX,KY)=-\eta(X,Y).
\label{eq:etaKK}
\end{equation}
\end{Def}
The same nomenclature as in the case of para-complex manifolds concerning (half-) integrability applies for para-Hermitian manifolds. Since the eigenbundles of $K$ have the same rank and $K$ is skew-orthogonal, the metric $\eta$ is of split signature $(d,d)$. Furthermore, the tensor $\omega\coloneqq \eta K$ is skew
\begin{align}
\omega(X,Y)=\eta(KX,Y)=-\eta(X,KY)=-\omega(Y,X),
\end{align}
and non-degenerate (because $\eta$ is non-degenerate), therefore $\omega$ is an almost symplectic form, called the {\it fundamental form}. The fundamental form is anti-compatible with $K$
\be
\omega(KX,KY)=-\omega(X,Y).
\ee
It follows that the eigenbundles $L$ and $\Lt$ are isotropic with respect to $\eta$ and Lagrangian with respect to  $\omega$.  When $\omega$ is closed, the pair $(\omega,K)$  defines an almost bi-Lagrangian structure \cite{Etayo:2004lja}.

\begin{Rem} \label{rem:canonicalform}
On a para-Hermitian manifold, there always exists a frame \cite[Prop. 1.4]{bejan1993existence} such that $\eta, \omega$ and $K$ take the block-matrix form
\begin{align}
\bar{\eta}&=
\begin{pmatrix}
0 & \id \\
\id & 0
\end{pmatrix},
&
\bar{\omega}&=
\begin{pmatrix}
0 & -\id \\
\id & 0
\end{pmatrix}, 
&
\bar{K}&=
\begin{pmatrix}
\id & 0 \\
0 & -\id
\end{pmatrix}.
\end{align}
In the following we will repeatedly use that  $\eta$ allows us to identify $L$ with ${\Lt}^*$ and $\Lt $ with $L^*$ as this choice of frame exemplifies.
\end{Rem}

The fundamental form $\omega$ can be used to express the para-Hermitian Nijenhuis tensor. We denote $N_K(X,Y,Z) := \eta(N_K(X,Y),Z)$ and find
\be
\cyc N_K(X,Y,Z)= 
\left[\rd \omega^{(3,0)}-\rd\omega^{(0,3)}\right](X,Y,Z).\label{cycN}
\ee
where $(p,q)$ is the standard bigrading of forms corresponding to the splitting of the tangent bundle $T\PS=L\oplus \Lt$ given by $K$. The proof  given for instance in \cite{Freidel:2017yuv} follows from (\ref{eq:nijenhuis}) and the identities
\be\label{domid}
\n_X\omega(Y,Z)=\eta ((\n_X K) Y,Z),\qquad
\n_X\omega(KY,KZ)=\n_X\omega(Y,Z). 
\ee
where $\n$ is any connection compatible with the metric $\eta$. An obvious consequence -- which will be of interest to us later -- is the fact that if $N_K$ is  skew in all indices then it is an exact $(3,0)+(0,3)$-form.

\begin{Rem}
The data $(\PS,K,\eta)$, $(\PS,\eta,\omega)$ and $(\PS,K,\omega)$ are equivalent and so we may use the different triples interchangeably.
\end{Rem}

\subsection{Canonical Connection}
\label{sec:cancon}
In this section we recall the definition and some properties of the canonical connection of a para-Hermitian manifold $(\PS,\eta,K)$ \cite{Freidel:2017yuv}. It was discovered in \cite{Freidel:2017yuv} that this connection can be used to construct a bracket on $T\PS$ -- the D-bracket -- which is related to the Dorfman bracket defining the key notion of Courant algebroids. Properties of this bracket were further investigated in \cite{Svoboda:2018rci}. We will review some of these properties in the subsequent sections.

\begin{Def}\label{def:cancon}
Let $(\PS,\eta,K)$ be a para-Hermitian manifold and let $\lc$ be the Levi-Civita connection of $\eta$. We define the canonical connection $\nc$ by
\begin{align}
\nc_XY=P\lc_X PY+\Pt \lc_X \Pt Y.
\end{align}
The connection $\nc$ can be expressed using its contorsion tensor as
\begin{align}\label{eq:can-con-contorsion}
\eta(\nc_XY,Z)=\eta(\lc_XY,Z)-\frac{1}{2}\lc_X\omega(Y,KZ).
\end{align}
Equivalently, it also can be expanded in the Koszul form as  
\begin{equation}
\begin{aligned}
2\eta(\nc_XY,Z)
&=  X[\eta(Y,Z)]-\eta(X,[Y,Z] + T^c(Y,Z))\\
& + Y[\eta(X,Z)]-\eta(Y,[X,Z] + T^c(X,Z))\\
& -   Z[\eta(X,Y)]+\eta(Z,[X,Y] + T^c(X,Y)),
\end{aligned}
\end{equation}
where $T^c(X,Y)$  is the torsion of the canonical connection.
\end{Def}
The canonical connection is a para-Hermitian connection, meaning that $\nc \omega=\nc \eta=0$. This in particular means that $\nc$ preserves the bundles $L$ and $\Lt$. We will need the following lemma which justifies the central importance of the canonical connection. Defining $T^c(X,Y,Z):= \eta(T^c(X,Y),Z)$
which evaluates to 
\be
 T^c(X,Y,Z)= \tfrac12\left(\lc_Y\omega(X,KZ)- \lc_X\omega(Y,KZ)\right),
\ee one gets
\begin{Lem} \label{polarised} The torsion of the canonical connection is polarised, i.e. $PT^c(PX,PY)= 0= \Pt T^c(\Pt X,\Pt Y)$, and its polar components are minus the components of the Nijenhuis tensor
\begin{equation}
\begin{aligned}
\Pt T^c(PX,PY) &= -\Pt [PX,PY] = -N_K^P(X,Y), \\
P T^c(\Pt X,\Pt Y) &= - P [\Pt X,\Pt Y] = -N_K^\Pt(X,Y).
\end{aligned}
\end{equation} 
The cyclic permutation of the torsion is related to the cyclic permutation of the Nijenhuis tensor via
\be
\cyc \left[T^c(X,Y,Z) + N_K(X,Y,Z)\right] = 
\left[\rd \omega^{(2,1)}-\rd\omega^{(1,2)}\right](X,Y,Z). 
\ee
\end{Lem}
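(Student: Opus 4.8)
The plan is to establish the two polarisation displays first, since the cyclic identity builds on them. For the polarisation I would compute $T^c$ directly from $\nc_X Y=P\lc_X PY+\Pt\lc_X\Pt Y$, using that $\lc$ is torsion-free. For arbitrary $X,Y$ the vectors $PX,PY$ lie in $L$, so $\nc_{PX}PY=P\lc_{PX}PY$ and
\begin{equation*}
T^c(PX,PY)=P\lc_{PX}PY-P\lc_{PY}PX-[PX,PY]=P[PX,PY]-[PX,PY]=-\Pt[PX,PY].
\end{equation*}
Applying $P$ gives $PT^c(PX,PY)=0$, while applying $\Pt$ and recalling $N_K^P(X,Y)=\Pt[PX,PY]$ gives $\Pt T^c(PX,PY)=-N_K^P(X,Y)$. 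Interchanging the roles of $P$ and $\Pt$ yields the two tilded statements, so both polarisation displays hold at once.

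For the cyclic identity, I would first note that $T^c$ and $N_K$ are skew in their first two arguments, so their cyclic sums are totally antisymmetric, i.e. honest $3$-forms; it therefore suffices to test $\cyc[T^c+N_K]$ on arguments of pure type. The structural input I would exploit is that $\lc\omega$ is of type $(2,0)+(0,2)$: by (\ref{domid}) one has $\lc_X\omega(Y,Z)=\eta((\lc_X K)Y,Z)$, and differentiating $K^2=\id$ shows $(\lc_X K)K=-K(\lc_X K)$, so $\lc_X K$ interchanges $L$ and $\Lt$ and $\lc_X\omega$ vanishes on any pair of mixed type. Restricted to $L$-arguments one has $N_K(X,Y)=\Pt[X,Y]$, which by the display above equals $-T^c(X,Y)$; the same occurs on $\Lt$. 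Hence $T^c+N_K$ vanishes on $L\times L$ and on $\Lt\times\Lt$, so $\cyc[T^c+N_K]$ carries no $(3,0)$ or $(0,3)$ component, matching the right-hand side, whose purely (anti)holomorphic content is already fixed by (\ref{cycN}).

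The remaining content is the mixed components. Testing on $X,Y\in\Gamma(L)$ and $Z\in\Gamma(\Lt)$, the type of $\lc\omega$ annihilates all but one summand in $\cyc T^c$ while every $N_K$ summand vanishes, and the survivor is a multiple of $\lc_Z\omega(X,Y)$ -- precisely the term of $\rd\omega(X,Y,Z)=\lc_X\omega(Y,Z)-\lc_Y\omega(X,Z)+\lc_Z\omega(X,Y)$ that survives on these arguments, namely $\rd\omega^{(2,1)}$. The mirror computation on one $L$- and two $\Lt$-arguments produces $\rd\omega^{(1,2)}$ with the opposite sign, so that $\cyc[T^c+N_K]$ reproduces $\rd\omega^{(2,1)}-\rd\omega^{(1,2)}$. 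A more invariant route replaces the case-check by the grading operator $\mathcal{K}\alpha(X,Y,Z)\coloneqq\alpha(KX,Y,Z)+\alpha(X,KY,Z)+\alpha(X,Y,KZ)$, which acts as $(p-q)$ on $(p,q)$-forms: combining the contorsion identity $\rd\omega(X,Y,KZ)=-2T^c(X,Y,Z)+\lc_{KZ}\omega(X,Y)$, summed cyclically, with (\ref{cycN}) isolates $\rd\omega^{(2,1)}-\rd\omega^{(1,2)}$ directly.

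In either approach the only real difficulty is the bookkeeping of the $K$-insertions and the induced bigrading, ensuring that the Nijenhuis contributions cancel against the $(3,0)$ and $(0,3)$ pieces and leave exactly the mixed-type components of $\rd\omega$; everything else is the routine application of torsion-freeness of $\lc$ and of the type $(2,0)+(0,2)$ of $\lc\omega$ established above.
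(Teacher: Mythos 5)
Your overall strategy is exactly the paper's own (much terser) proof: the polarisation displays come from the same computation, $\nc_{PX}PY=P\lc_{PX}PY$ plus torsion-freeness of $\lc$ giving $T^c(PX,PY)=-\Pt[PX,PY]$, and the cyclic identity is then ``read off'' from \eqref{cycN} and the contorsion form of $T^c$; your additional scaffolding -- total skewness of the cyclic sums to justify testing on pure types, and the type $(2,0)+(0,2)$ of $\lc\omega$, which is precisely the content of \eqref{domid} -- is correct, as is your auxiliary identity $\rd\omega(X,Y,KZ)=-2T^c(X,Y,Z)+\lc_{KZ}\omega(X,Y)$.

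The one place where your sketch is asserted rather than proved is the mixed-type bookkeeping, and that is exactly where care is needed. On arguments $X,Y\in\Gamma(L)$, $Z\in\Gamma(\Lt)$, two of the three cyclic summands of $T^c$ survive, not one: $T^c(X,Y,Z)=0$ (after $KZ=-Z$ both $\lc\omega$-terms are of mixed type, and equivalently $T^c(X,Y)\in\Gamma(\Lt)$ pairs to zero with $Z$ by isotropy), while $T^c(Y,Z,X)=\tfrac12\lc_Z\omega(Y,X)$ and $T^c(Z,X,Y)=-\tfrac12\lc_Z\omega(X,Y)$, so that $\cyc T^c(X,Y,Z)=-\lc_Z\omega(X,Y)$, whereas $\rd\omega(X,Y,Z)=+\lc_Z\omega(X,Y)$ on these arguments (the $N_K$ summands all vanish there, as you say). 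Under the $(p,q)$-labelling that your own anchor \eqref{cycN} forces -- checking \eqref{cycN} on all-$L$ and all-$\Lt$ arguments fixes $p$ to count $L$-legs -- the honest outcome of your computation is therefore $\cyc[T^c+N_K]=\rd\omega^{(1,2)}-\rd\omega^{(2,1)}$, the \emph{negative} of the displayed right-hand side, and the mirror evaluation on $(L,\Lt,\Lt)$ arguments, which gives $+\lc_X\omega(\Pt Y,\Pt Z)$, confirms this. So your claim that the survivors ``reproduce $\rd\omega^{(2,1)}-\rd\omega^{(1,2)}$'' does not fall out automatically: within the conventions you fixed via \eqref{cycN} there is an argument-order/overall sign to track (note the paper itself writes $\Th_{\lc}(Y,X,Z)$ on the left of its Levi-Civita example, apparently to absorb just such a sign). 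The architecture of your argument is right and is the paper's; to complete it you must actually perform this one evaluation and reconcile the sign, rather than asserting agreement.
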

\begin{proof}
The first statement follows from (\ref{domid}) which implies that $\nabla_X \omega(PY,\Pt Z)=0$. The second statement follows from the evaluation
\bea T^c(PX,PY)&=& \nc_{PX}PY-\nc_{PY}PX -[PX,PY]\cr
&=& P( \lc_{PX}PY-\lc_{PY}PX) -[PX,PY],\cr
&=& - \Pt ([PX,PY])= - N_K(PX,PY).
\eea
The third statement follows directly from this after summing over cyclic permutations. It can also be read off from the expression (\ref{cycN}) of the Nijenhuis tensor and the form of the torsion.
\end{proof}


\subsection{D-Bracket}
\label{sec:Dstructure}

In this section we formally introduce the notion of a D-bracket. The D-bracket is a natural bracket operation defined on the tangent bundle of a para-Hermitian manifold and can be thought of as a generalization of the Lie bracket. The fact that a coordinate-free definition of the D-bracket requires a para-Hermitian structure has been first observed by Vaisman in  \cite{Vaisman:2004msa, Vaisman:2012ke, Vaisman:2012px} and further developed in \cite{Freidel:2017yuv,Svoboda:2018rci} along with a detailed discussion of the relationship with the Dorfman bracket \cite{Dorfman}.

In the mathematical literature the Dorfman bracket is defined abstractly in terms of its defining properties \cite{Severa:2017oew}. In the physics literature \cite{Grana:2008yw, Hull:2009zb}, both the usual D-bracket and its twisted versions are understood as a specific bracket operation and as such are defined by explicit formulas, usually in the form of coordinate expressions. 

The aim here is to define the notion of the D-bracket abstractly along with its specialization, the \emph{canonical} D-bracket which matches the definition of the usual D-bracket in physics. The deeper reason behind this abstract approach will be further exposed in a subsequent publication \cite{Freidel:new}. Here we only introduce the definitions and some of the main properties.

\begin{Def}
\label{def:MetricAlgebroid}
A metric-compatible bracket on a pseudo-Riemannian manifold $(\PS,\eta)$  is a bilinear operation 
$\bracd: \XX(\PS)\times \XX(\PS) \to \XX(\PS)$ on the algebra of vector fields, satisfying the following properties for any $X,Y,Z\in \XX(\PS)$ and $f\in C^\infty(\PS)$:
\begin{align}
X[\eta(Y,Z)] &= \eta(\bl X,Y\br,Z)+\eta(Y,\bl X,Z\br), \label{eq:DbracketComp} \\
\bl X,fY\br &= f\bl X,Y\br+X[f]Y, \label{eq:DbracketLeibniz1} \\
\eta(Y,\bl X,X\br ) &= \eta(\bl Y,X\br,X).
\label{eq:DbracketNorm}
\end{align}
\end{Def}
The triple $(T\PS,\eta,\bracd)$ -- where $\bracd$ is compatible with the metric $\eta$ -- defines a so called \textit{metric algebroid} \cite{Vaisman:2012ke} where the anchor\footnote{Note that unlike for a Lie or Courant algebroid, the anchor map for a metric algebroid is not required to satisfy the morphism compatibility condition between the metric algebroid bracket and the Lie bracket.}  is taken to be the identity map on $T\PS$.

The properties of a metric compatible  bracket can be understood in the following way. The Leibniz property \eqref{eq:DbracketLeibniz1} says that the operation $\mathbb{L}_X:=\bl X,\cdot\br$ is a derivation, usually called the generalized Lie derivative\footnote{We use the bracket notation in this paper to denote the derivative since it renders the formulae simpler to read and since we do not use the C-bracket. This should be kept in mind when comparing with results in \cite{Freidel:2017yuv} where the Lie derivative notation is used.}. The condition \eqref{eq:DbracketComp} means that this derivative is compatible with the para-Hermitian metric. It is important to note here that the bracket is not skew-symmetric. The condition \eqref{eq:DbracketNorm} determines the symmetric component of the bracket. In order to unfold the meaning of this condition, it is customary to associate to any function $f$ a vector field $\mathcal{D}[f]$ which is defined by the condition $\eta(\mathcal{D}[f],X)=X[f]$.  Equivalently this means that $\mathcal{D}[f] : = \eta^{-1}\rd f$. From (\ref{eq:DbracketComp}) and (\ref{eq:DbracketNorm}) we conclude that $Y[\eta(X,X)] = 2 \eta(\bl Y,X\br,X)= 2\eta(Y,\bl X,X\br)$ which implies that
\be 
\label{eq:DbracketNorm1}
\bl X,X\br = \frac{1}{2}\mathcal{D}[\eta(X,X)]. 
 \ee
This is the form usually used. Finally we note that the combination of \eqref{eq:DbracketNorm1} and \eqref{eq:DbracketLeibniz1} can be used to determine the other Leibniz property of the bracket
\be
\bl fX,Y\br = f\bl X,Y\br-Y[f]X+\eta(X,Y)\mathcal{D}[f].\label{eq:DbracketLeibniz2} 
\ee

In the spirit of generalizing usual geometrical notions by replacing the Lie bracket with the metric algebroid bracket, we define the generalized Nijenhuis tensor which measures the failure of a para-Hermitian structure to be integrable in the generalized sense:

\begin{Def}\label{def:gen-nij}
Let $(T\PS,\eta,\bracd)$ be a metric algebroid and $C \in \End\,T\PS$ a charge conjugation operator\footnote{$C$ stands for charge conjugation. This follows from the fact that such operations can be used to change the sign of the metastring action \cite{Freidel:2015pka} into its opposite.} such that
\begin{align}\label{eq:gen-nij-requirements}
C^2=\pm \id,\quad \eta(CX,Y)=- \eta(X,CY).
\end{align}
We define the generalized Nijenhuis tensor ${\cal{N}}_C$ associated with $C$ by
\begin{equation}
{\cal{N}}_C(X,Y) : =C^2\bl X,Y\br +\bl C X,C Y\br  - C\big(\bl C X,Y\br  + \bl X,C Y\br\big).
\label{eq:gen-nij-def}
\end{equation}
\end{Def}

One first has to check that the conditions (\ref{eq:gen-nij-requirements}) ensure that ${\cal{N}}_C$ is skew-symmetric even if the bracket is not. The fact that the generalized Nijenhuis tensor is indeed a tensor follows from then the usual evaluation
\begin{align}
{\cal{N}}_C(X,fY)&=f{\cal{N}}_C(X,Y)\pm X[f]Y +C(X)[f]C(Y)  - C\big(C(X)[f]Y  +  X[f]C(Y)\big) \notag\\
&=f{\cal{N}}_C(X,Y)\pm X[f]Y+C(X)[f]C(Y)-C(X)[f]C(Y)\mp X[f]Y \notag\\
&=f{\cal{N}}_C(X,Y).
\end{align}
Notice that unless the right signs in \eqref{eq:gen-nij-requirements} are chosen, $\mathcal{N}_C$ will not be tensorial. This is because $\bracd$ is not skew-symmetric and therefore the tensoriality in each entry of $\mathcal{N}_C$ give different requirements on the sign choice. This is in contrast with the usual Nijenhuis tensor which does not require $C$ to be appropriately compatible with a metric.

\begin{Def} 
\label{def:Dbracket}
A D-bracket\footnote{D stands for {\it Dorfman} since its definition generalizes the  axioms of the Dorfman bracket. It also stands for  {\it differentiable} since it provides a generalization of the differentiable structure encoded by the Lie bracket to para-Hermitian manifolds. It also could stand for {\it Dirac structure} since both $L$ and $\Lt$ are null and integrable $\bl L, L \br \subset L$ and $\bl \Lt,\Lt\br\subset \Lt$ by definition for a D-bracket. We let the reader decide which interpretation is more appropriate for him or her.} on an almost para-Hermitian manifold $(\PS,\eta, K)$ is a metric-compatible bracket $\bracd$ such 
that $K$ is integrable in the generalized sense, i.e. such that ${\cal{N}}_K=0$. When this is the case we will refer to the data $(\PS,\eta,K,\bracd)$ as a D-structure. 
\end{Def}

The condition to be a D-bracket simply means that $L$ and $\Lt$ are Dirac structures with respect to $\bracd$. This means they are integrable in the generalized sense 
\begin{equation}
\begin{aligned}
\NN_K(PX,PY) &= 4\Pt(\bl PX, PY\br) = 0, \\
\NN_K(\Pt X,\Pt Y) &= 4P(\bl \Pt X, \Pt Y\br) = 0.
\end{aligned}
\label{Generalisedint}
\end{equation}
This generalized integrability property will play a crucial role in the construction of the Born connection. 

Finally, among all possible metric algebroid brackets we single out the ones which are canonical:
\begin{Def}
A D-structure is said to be canonical if it satisfies 
\bea 
\bl PX,PY \br&= P([PX,PY]),\label{eq:comp1}\\
\bl \Pt X,\Pt Y \br&=\Pt([ \Pt X,\Pt Y]) \label{eq:comp2}
\eea
where $[\,,\,]$ denotes the Lie bracket. 
\end{Def}
The conditions \eqref{eq:comp1} and \eqref{eq:comp2} mean that a canonical D-bracket restricted to the Lagrangian $L$ (respectively $\Lt$) is given by the projection of the Lie bracket onto $L$ (respectively $\Lt$). These conditions are compatible with the generalized integrability conditions of $L$ and $\Lt$ \eqref{Generalisedint}.
 
The key role of the D-bracket for the current discussion is that it generalizes the notion of a differentiable structure and allows us to define a natural notion of generalized torsion on $T\PS$, discussed in Section \ref{sec:gentor}.  The existence of a metric compatible bracket and the existence of a canonical D-bracket follows from the following lemma:
\begin{Lem}
There exists a unique canonical D-bracket on $(\PS,\eta,K)$ which is
 given by 
\be\label{eq:D-bracket}
\eta(\bl X,Y\br^c,Z)=\eta(\nc_XY-\nc_YX,Z)+\eta(\nc_ZX,Y)
\ee 
where $\nc$ is the canonical connection (Definition \ref{def:cancon}).
\end{Lem}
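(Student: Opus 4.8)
The plan is to prove existence by verifying that the Koszul-type formula \eqref{eq:D-bracket} satisfies every requirement, and then to prove uniqueness by a tensorial argument exploiting the eigenbundle splitting. For existence I would first check the three axioms of a metric-compatible bracket. Axioms \eqref{eq:DbracketComp} and \eqref{eq:DbracketNorm} are immediate from $\nc\eta=0$: forming $\eta(\bl X,Y\br^c,Z)+\eta(Y,\bl X,Z\br^c)$, the torsion-like terms $\eta(\nc_Z X,Y)$ cancel pairwise, leaving $\eta(\nc_X Y,Z)+\eta(\nc_X Z,Y)=X[\eta(Y,Z)]$, while the symmetric condition \eqref{eq:DbracketNorm} reduces to $\eta(\nc_Y X,X)$ on both sides. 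The Leibniz rule \eqref{eq:DbracketLeibniz1} follows from the derivation property of $\nc$ in its upper slot together with its $C^\infty$-linearity in the subscript.

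Next I would establish the two canonical conditions \eqref{eq:comp1}--\eqref{eq:comp2}, where the decisive inputs are Lemma \ref{polarised} and the isotropy of $L$. Polarisation of the torsion gives $\nc_{PX}PY-\nc_{PY}PX=[PX,PY]+T^c(PX,PY)=P[PX,PY]$, so the antisymmetric part of \eqref{eq:D-bracket} already produces $P[PX,PY]$; the remaining term $\eta(\nc_Z PX,PY)$ vanishes because $\nc$ preserves $L$, so $\nc_Z PX$ and $PY$ both lie in the isotropic bundle $L$. This yields $\bl PX,PY\br^c=P[PX,PY]$, and the mirror computation with $\Pt$ gives \eqref{eq:comp2}. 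The D-bracket property $\NN_K=0$ is then automatic: by $\mathbb R$-bilinearity $\NN_K(X,Y)$ splits into four eigenbundle pieces, the mixed pieces $\NN_K(PX,\Pt Y)$ and $\NN_K(\Pt X,PY)$ vanish identically through the sign flips of $K$, and the pure pieces equal $4\Pt\bl PX,PY\br^c$ and $4P\bl \Pt X,\Pt Y\br^c$, which are zero by the canonical conditions just proved.

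For uniqueness I would compare two canonical D-brackets $\bracd$ and $\bracd'$ and study their difference $\Delta(X,Y):=\bl X,Y\br-\bl X,Y\br'$. Since both Leibniz identities \eqref{eq:DbracketLeibniz1} and \eqref{eq:DbracketLeibniz2} hold for each bracket, all inhomogeneous terms cancel in the difference, so $\Delta$ is $C^\infty$-bilinear, i.e. a genuine tensor. Setting $T(X,Y,Z):=\eta(\Delta(X,Y),Z)$, metric compatibility \eqref{eq:DbracketComp} makes $T$ antisymmetric in $(Y,Z)$; meanwhile the symmetric part of the bracket is fixed by $\eta$ alone, since polarising \eqref{eq:DbracketNorm1} gives $\bl X,Y\br+\bl Y,X\br=\mathcal{D}[\eta(X,Y)]$ for both brackets, forcing $\Delta$ to be antisymmetric in $(X,Y)$. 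These two transpositions generate $S_3$, so $T$ is totally antisymmetric.

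Finally, the canonical conditions give $\Delta(PX,PY)=0$ and $\Delta(\Pt X,\Pt Y)=0$, so the totally antisymmetric form $T$ vanishes whenever two of its arguments lie in the same eigenbundle $L$ or $\Lt$. Evaluating $T$ on any triple of vectors, each lying in $L$ or in $\Lt$, the pigeonhole principle forces two of the three into the same eigenbundle, so $T$ vanishes on all such triples and hence, by tensoriality, identically. Therefore $\Delta=0$ and the canonical D-bracket is unique. I expect the main obstacle to be precisely this uniqueness step, namely recognising that metric compatibility together with the $\eta$-fixed symmetric part upgrades $\Delta$ to a totally antisymmetric $3$-form, after which the pigeonhole argument on the two-eigenbundle splitting closes the proof; by contrast the existence computations are direct consequences of $\nc\eta=0$, the polarisation of $T^c$, and the isotropy of the Lagrangian eigenbundles.
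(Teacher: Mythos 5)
Your proof is correct and follows essentially the same route as the paper: uniqueness via showing the difference of two canonical D-brackets is a fully skew three-form (skew in $(Y,Z)$ from metric compatibility, skew in $(X,Y)$ from the $\eta$-determined symmetric part, tensorial from the Leibniz identities) that vanishes on polarized pairs and hence identically, and existence by direct verification of the axioms together with the anchoring conditions \eqref{eq:comp1}--\eqref{eq:comp2}. The only differences are presentational: you verify the metric-compatibility, symmetry and Leibniz axioms explicitly where the paper cites \cite{Freidel:2017yuv}, you obtain the anchoring property from Lemma \ref{polarised} and the isotropy of $L$ rather than the paper's equivalent direct Levi-Civita computation, and you spell out the resulting vanishing of $\NN_K$, which the paper leaves implicit in \eqref{Generalisedint}.
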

The canonical D-bracket and its twisted version, as well as its relationship to the Dorfman bracket and generalized geometry, are discussed in detail in \cite{Freidel:2017yuv, Svoboda:2018rci}. We will expand on its relation to generalized geometry in a forthcoming publication \cite{Freidel:new}. 
\begin{proof}
One starts by proving unicity. Let's assume that $\bracd$ and $\bracd'$ are two canonical D-brackets associated to the same almost para-Hermitian manifold $(\PS,\eta,K)$, denote the difference of D-brackets by 
\be
\bl X, Y\br' = \bl X, Y\br +\Delta\cF(X,Y).
\ee
and define $\Delta\cF(X,Y,Z):= \eta(\Delta\cF(X,Y),Z)$. From the metric compatibility \eqref{eq:DbracketComp} it follows that $\Delta\cF(X,Y,Z)=-\Delta\cF(X,Z,Y)$ is skew in the last two slots. From the symmetry property \eqref{eq:DbracketNorm} it follows that $\Delta\cF(X,Y,Z)=-\Delta\cF(X,Y,Z)$ is skew in the first two entries, hence it is fully skew. From the Leibniz property \eqref{eq:DbracketLeibniz1} it follows that $\Delta\cF(X,Y,fZ)= f\Delta\cF(X,Y,Z)$ is tensorial, hence it is a three-form on $\PS$. From the generalized anchoring property (\ref{eq:comp1}, \ref{eq:comp2}) it follows that $\Delta\cF(PX,PY)=0=\Delta\cF(\Pt X,\Pt Y)$. In other words this means that $\Delta\cF(PX,PY,Z) =0$ and $\Delta\cF(\Pt X,\Pt Y ,Z)=0$ for all $Z \in \XX(\PS)$. Since $\Delta\cF$ is a three-form, this implies that $\Delta\cF=0$ which shows unicity.

The fact that the canonical bracket satisfies the metric compatibility \eqref{eq:DbracketComp}, the symmetry property \eqref{eq:DbracketNorm} and the Leibniz property \eqref{eq:DbracketLeibniz1} was already shown in \cite{Freidel:2017yuv} and follows by direct inspection. The generalized anchoring property follows from 
\bea
 \eta(\bl PX,PY\br^c,Z)&=&\eta(\nc_{PX}PY-\nc_{PY}PX,Z)+\eta(\nc_ZPX,PY)\cr
 &=& \eta(P(\lc_{PX}PY-\lc_{PY}PX),Z)\cr
 &=& \eta(P([{PX},PY]),Z).
\eea
The second equality follows from the definition of the canonical connection, the fact that it preserves $L$ and that $L$ is Lagrangian, while the last equality follows from the torsionlessness of the Levi-Civita connection.
\end{proof}

\subsection{Generalized Torsion}
\label{sec:gentor}

We will now define a notion of generalized torsion \cite{Gualtieri:2007bq,Coimbra:2011nw} which is a tensorial quantity assigned to any connection on a para-Hermitian manifold  equipped with a D-structure $\bracd$. The generalized torsion can be thought of as a direct analogue to the usual torsion tensor of a connection where the Lie bracket $\brac$ is replaced by the canonical D-bracket\footnote{From now on we will only use the canonical D-bracket as a reference bracket, we will therefore drop the superscript $c$. In other words $\bracd\equiv \bracd^c$ for the rest of the paper.} $\bracd$ in an appropriate manner. The usual torsion tensor of a connection $\n$ is defined as
\begin{align}
T_\n(X,Y)=\n_XY-\n_YX-[X,Y],
\end{align}
and measures how much the bracket $[ X,Y]^\n: = \n_XY-\n_YX$ associated with the connection $\n$ differs from the Lie bracket. This leads us to the following generalization:
\begin{Def}\label{def:gen-torsion}
Let $(\PS,\eta,K, \bracd)$ be the canonical D-structure and $\n$ a connection on $T\PS$. The generalized torsion of $\n$ is defined as
\begin{align}
\Th_\n(X,Y,Z):=\eta(\n_XY-\n_YX,Z)+\eta(\n_ZX,Y)-\eta(\bl X,Y\br,Z).
\label{eq:gen-torsion}
\end{align}
\end{Def}
The meaning of the generalized torsion is that it measures how much the bracket $\bracd^\n$ associated to the connection $\n$
\begin{align}
\eta(\bl X,Y\br^\n, Z):=\eta(\n_XY-\n_YX,Z)+\eta(\n_ZX,Y),
\end{align}
deviates from the canonical D-bracket. The generalized torsion then vanishes precisely when the connection $\n$ gives the canonical D-bracket via the above formula. 

A set of key  properties for the generalized torsion which generalizes the skew-symmetry property of the usual torsion is the following:
\begin{Lem}
The generalized torsion $\Th_\n(X,Y,Z)$ of a connection $\n$ is a three-form -- i.e. it is  tensorial and fully skew -- if and only if $\n$ is compatible with $\eta$. When this is the case the bracket $\bracd^\n$ is metric-compatible.

Moreover, if  $\n$ is para-Hermitian -- i.e. it preserves $(\eta, K)$ -- the $(3,0)$ and $(0,3)$ components of $\Th_\n(X,Y,Z)$ vanish and the bracket $\bracd^\n$ is a D-bracket which is not necessarily canonical.
\end{Lem}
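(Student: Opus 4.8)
The plan is to read the generalized torsion as the gap between two brackets, $\Th_\n(X,Y,Z)=\eta(\bl X,Y\br^\n,Z)-\eta(\bl X,Y\br,Z)$, and to separate the three-form property into (i) tensoriality and (ii) the two independent skew-symmetries. I would first dispose of tensoriality, which I expect to hold for \emph{any} connection. Checking each slot, the inhomogeneous terms produced by $\n_{fX}Y$, $\n_Y(fX)$ and $\n_{fZ}X$ must cancel against those produced by the Leibniz rules \eqref{eq:DbracketLeibniz1} and \eqref{eq:DbracketLeibniz2} of the canonical D-bracket; the delicate slot is the first one, where the term $\eta(X,Y)\mathcal{D}[f]$ appearing in \eqref{eq:DbracketLeibniz2} is exactly what cancels the $Z[f]\eta(X,Y)$ coming from $\eta(\n_Z(fX),Y)$. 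Hence all the content of the first claim lies in skew-symmetry.

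For skew-symmetry I would compute the two symmetrizations directly. Symmetrizing the last two slots yields
\[
\Th_\n(X,Y,Z)+\Th_\n(X,Z,Y)=\eta(\n_XY,Z)+\eta(\n_XZ,Y)-X[\eta(Y,Z)]=-(\n_X\eta)(Y,Z),
\]
where the cancellation of the bracket terms is precisely the metric compatibility \eqref{eq:DbracketComp} of the canonical D-bracket. Symmetrizing the first two slots gives $\Th_\n(X,Y,Z)+\Th_\n(Y,X,Z)=-(\n_Z\eta)(X,Y)$, where the crucial input is instead the symmetric part of the D-bracket, $\bl X,Y\br+\bl Y,X\br=\mathcal{D}[\eta(X,Y)]$, obtained by polarizing \eqref{eq:DbracketNorm1}. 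These two identities establish the equivalence at once: either skew-symmetry forces $\n\eta=0$, while conversely $\n\eta=0$ produces both, so together with automatic tensoriality this proves the first statement.

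The metric compatibility of $\bracd^\n$ then follows from a one-line computation: for arbitrary $\n$ one finds $\eta(\bl X,Y\br^\n,Z)+\eta(Y,\bl X,Z\br^\n)=\eta(\n_XY,Z)+\eta(Y,\n_XZ)$, which equals $X[\eta(Y,Z)]$ exactly when $\n\eta=0$; the Leibniz axiom \eqref{eq:DbracketLeibniz1} and the symmetry axiom \eqref{eq:DbracketNorm} hold for $\bracd^\n$ unconditionally, directly from its defining formula. For the para-Hermitian statement the key observation is that preservation of $K$ is equivalent to $\n_W(PY)=P\,\n_WY$ and $\n_W(\Pt Y)=\Pt\,\n_WY$, so the terms $\n_{PX}PY,\ \n_{PY}PX,\ \n_{PZ}PX$ all lie in $L$. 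Since $L$ is $\eta$-isotropic, the three metric terms in $\Th_\n(PX,PY,PZ)$ vanish, and the canonical-bracket term vanishes by \eqref{eq:comp1} together with isotropy; the same argument with $\Pt$ kills the $(0,3)$ part. To see $\bracd^\n$ is a D-bracket I would test the generalized integrability \eqref{Generalisedint}: the identity $\NN_K(PX,PY)=4\Pt\bl PX,PY\br^\n$ reduces the claim to $\bl PX,PY\br^\n\in L$, i.e. $\eta(\bl PX,PY\br^\n,PZ)=0$ for all $Z$, which is again immediate from the $L$-valuedness above and isotropy of $L$; the $\Lt$-case is symmetric. That $\bracd^\n$ need not be canonical is clear since its mixed pairing $\eta(\bl PX,PY\br^\n,\Pt Z)$ still depends on $\n$ and need not match $\eta(P[PX,PY],\Pt Z)$.

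I expect the genuine bookkeeping to be concentrated in two places. First, verifying that tensoriality is truly unconditional, which requires every inhomogeneous term to cancel and relies on getting the full Leibniz rule \eqref{eq:DbracketLeibniz2} right. Second, and conceptually the most error-prone step, is the first-two-slot symmetrization: it uses the polarized norm identity rather than \eqref{eq:DbracketComp}, so the sign and the placement of the $\mathcal{D}[\eta(X,Y)]$ term are exactly where a slip would occur. By contrast the para-Hermitian half, once one records $\n_W P=P\n_W$, is essentially forced by isotropy of $L$ and $\Lt$ and should present no obstacle.
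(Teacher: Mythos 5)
Your proposal is correct, but it takes a genuinely different route from the paper's. The paper channels everything through the contorsion tensor $\Omega(X,Y,Z)=\eta(\n_XY,Z)-\eta(\nc_XY,Z)$ relative to the canonical connection: since the canonical D-bracket is generated by $\nc$, the torsion collapses to $\Th_\n(X,Y,Z)=\Omega(X,Y,Z)-\Omega(Y,X,Z)+\Omega(Z,X,Y)$, so tensoriality is automatic, skewness of $\Omega$ in its last two slots is equivalent to $\n\eta=0$ and turns $\Th_\n$ into the cyclic sum $\cyc\Omega$, hence a three-form; the converse and the para-Hermitian statement (via $\Omega(PX,PY,PZ)=0$) are read off from the same representation. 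You instead work directly from the axioms of the D-bracket, never invoking $\nc$: you check tensoriality slot by slot (correctly observing it is unconditional, with the $\eta(X,Y)\mathcal{D}[f]$ term of \eqref{eq:DbracketLeibniz2} cancelling $Z[f]\eta(X,Y)$), and you derive the identities $\Th_\n(X,Y,Z)+\Th_\n(X,Z,Y)=-(\n_X\eta)(Y,Z)$ from \eqref{eq:DbracketComp} and $\Th_\n(X,Y,Z)+\Th_\n(Y,X,Z)=-(\n_Z\eta)(X,Y)$ from the polarized norm identity; both computations check out, and they are in fact sharper than what the paper records, since they identify each symmetric part of $\Th_\n$ with $\n\eta$ explicitly and yield the equivalence because the two transpositions generate the full permutation group. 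Your para-Hermitian half is likewise more explicit: where the paper asserts the conclusion from the vanishing $(3,0)+(0,3)$ components (and its closing sentence even says ``canonical'', an apparent slip contradicting the lemma's ``not necessarily canonical''), you verify \eqref{Generalisedint} via the identity $\NN_K(PX,PY)=4\Pt\bl PX,PY\br^\n$, isotropy of $L$, and $\n P=P\n$, and you correctly explain why the mixed pairing $\eta(\bl PX,PY\br^\n,\Pt Z)$ keeps the bracket from being canonical. One micro-gap worth recording: a D-bracket requires $\NN_K=0$ on \emph{all} arguments, so you should add the one-line observation that the mixed components $\NN_K(PX,\Pt Y)$ vanish identically for any bilinear bracket (because $K$ acts by $\pm1$ on the eigenbundles), which is exactly what makes testing \eqref{Generalisedint} sufficient.
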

\begin{proof}
In order to see this we introduce the  contorsion tensor $\Omega$ associated with $\n$ which measures its deviation from  the canonical connection,
\be
\eta(\n_XY,Z)=\eta(\nc_XY,Z)+\Omega(X,Y,Z).
\ee
The generalized torsion in terms of the contorsion reads
\begin{align}\label{eq:gen-torsion-omega}
\Th^\n(X,Y,Z)=\Omega(X,Y,Z)-\Omega(Y,X,Z)+\Omega(Z,X,Y).
\end{align}
It is easy to see that the contorsion tensor is skew-symmetric in its last last two entries if and only if $\n$ preserves $\eta$,
\begin{align}\label{eq:compatibility-contorsion}
\Omega(X,Y,Z)=-\Omega(X,Z,Y) \iff \n \eta=0.
\end{align}
 Using the skew-symmetry property of $\Omega$, we get 
\begin{align}\label{cyclic}
\Th^\n(X,Y,Z)=\Omega(X,Y,Z)+\Omega(Y,Z,X)+\Omega(Z,X,Y)=\cyc\Omega(X,Y,Z).
\end{align}
The generalized torsion $\Th^\n(X,Y,Z)$ is therefore invariant under cyclic permutations. The fact that is is totally skew follows from skewness of $\Omega$.
For the converse statement, we use that 
\be
\Th(X,Y,Z)+\Th(Y,X,Z)=\Omega(X,Y,Z)+\Omega(X,Z,Y),
\ee
which vanishes if  $\Th$ is fully skew. This yields $\Omega(X,Y,Z)+\Omega(X,Z,Y)=0$ and \eqref{eq:compatibility-contorsion} then shows that $\n$ must be compatible with $\eta$.

If $\n$ is a para-Hermitian connection, i.e. $\n P = P\n$, then
\be
\eta(\n_{PX}PY,PZ) =  \eta(P \n_{PX}PY,PZ) =0.
\ee
From this it is clear that the contorsion satisfies $\Omega(PX,PY,PZ) =0$ and that the $(3,0)$ component of $\Th$ also vanishes. The same argument applies for $\Pt$.  Therefore the generalized torsion of a para-Hermitian connection is a 
$(2,1)+(1,2)$-form. This in turn implies that the bracket $\bracd^\n$ is canonical.
\end{proof}

\begin{Ex}[Generalized torsion of the Levi-Civita connection]
A counter-intuitive fact following from the definition of the generalized torsion is that Levi-Civita now has a torsion (in this generalized sense) unless $\omega$ is closed. This is easy to see from the formula \eqref{eq:can-con-contorsion} where we can explicitly read off the contorsion tensor, $\Omega(X,Y,Z)=\frac{1}{2}\lc_X\omega(Y,KZ)$. Using the formula
\begin{align}
\rd\omega(X,Y,Z)=\cyc \lc_X\omega(Y,Z),
\end{align}
we find  that the generalized torsion is then given by
\begin{align}
\begin{aligned}
\Th_{\lc}(Y,X,Z)=
\frac{1}{2}[&\rd \omega^{(3,0)}+\rd\omega^{(2,1)}-\rd\omega^{(1,2)}-\rd\omega^{(0,3)}](X,Y,Z).
\end{aligned}
\end{align}
where the superscript $(p,q)$ denotes the standard bigrading of forms corresponding to the splitting of the tangent bundle $T\PS=L\oplus \Lt$ given by $K$.
\end{Ex}

\section{Born Geometry}
\label{sec:borngeo}
We have seen that para-Hermitian geometry is a para-complex geometry $(\PS,K)$ with a compatible neutral metric $\eta$, i.e. one of signature $(d,d)$. Now we will take the next step by adding another compatible metric $\HH$ of signature $(2d,0)$ to the picture, giving rise to what has been named Born geometry \cite{Freidel:2014qna}. The additional Riemannian structure $\HH$ defines two more endomorphisms of the tangent bundle which -- along with the para-complex structure $K$ already in place -- form a para-quaternionic structure \cite{Ivanov:2003ze,Freidel:2013zga, Freidel:2015pka}.

\begin{Def}
Let $(\PS,\eta,\omega)$ be a para-Hermitian manifold and let $\HH$ be a Riemannian metric satisfying
\begin{align}
\eta^{-1}\HH=\HH^{-1}\eta,\quad \omega^{-1}\HH=-\HH^{-1}\omega .
\end{align}
Then we call the triple $(\eta,\omega,\HH)$ a Born structure on $\PS$ where $\PS$ is called a Born manifold and $(\PS,\eta,\omega,\HH)$ a Born geometry.
\end{Def}

Note that in this definition we view $(\eta,\omega,\HH)$ as maps $T\PS\to T^*\PS$. In order to understand the nature of a Born structure we review its three fundamental structures. First, as we have seen, it contains an almost {\it para-Hermitian} structure $(\omega, K)$ with compatibility
\be
K^2=\id,\qquad \omega(KX,KY)=-\omega(X,Y). 
\ee
Next, the compatibility between $\eta$ and $\HH$ implies that $J=\eta^{-1}\HH \in \mathrm{End}\, T\PS$ defines what we refer to as a  {\it  chiral} structure\footnote{As we will see, the projectors $P_\pm=\frac12(\id\pm J)$ define a splitting of the tangent space $T\PS = C_+ \oplus C_-$ which mirrors the right/left chiral splitting of string theory. In other words,
given a worldsheet $X:\Sigma \to \PS$ with $\Sigma$ a complex surface, we have that $\pa_z X \in C_+$ and $\pa_{\bar{z}} X \in C_-$. Hence $J$ is called a chiral structure.} $(\eta,J)$ on $\PS$
\begin{align}
J^2=\id,\qquad 
\eta(JX,JY)=\eta(X,Y).
\end{align}
Finally, the compatibility between  $\HH$ and $\omega$ defines an almost {\it Hermitian} structure $(\HH,I)$ on $\PS$
\be
I^2=-\id,\qquad \HH(IX,IY)= \HH(X,Y).
\ee
These three structures, para-Hermitian, chiral and Hermitian satisfy in turn an extra compatibility equation
\be
KJI=\id,
\ee
which follows directly from their definition $K=\omega^{-1}\eta$, $J=\eta^{-1} \HH$, $I=\HH^{-1}\omega$. This means that the triple $(I,J,K)$  form an almost para-quaternionic structure \cite{Ivanov:2003ze,Freidel:2013zga, Freidel:2015pka}
\begin{align}
-I^2=J^2=K^2=\id, \qquad 
\{I,J\}=\{J,K\}=\{K,I\}=0,\qquad 
KJI=\id,
\end{align}
where $\{\ , \ \}$ is the anti-commutator. 

\subsection{The Physics of Born geometry}

Depending on the perspective one wants to emphasize, one can present the Born geometry by selecting two out of the three geometric structures. This gives three different perspectives that allow one to accentuate the relationship of Born geometry with three distinct geometrical structures: Courant geometry, string geometry or the geometry of quantum mechanics.

The emphasis in our paper is on the para-Hermitian triple $(\eta,\omega,K)$ which stresses the relationship of the Born structure with generalized Courant geometry since this structure allows for the construction of a D-bracket $\bracd$. As we have seen, this bracket provides a generalization of the Lie bracket on $M$ where $M$ is viewed as embedded in $\PS$ and $TM=L$ defines a Lagrangian. 

If one is interested in underlining  the relationship of Born geometry with the geometry of string theory, one presents it in terms of its  chiral structure $(\eta, \HH, J)$. The two metrics encode the diffeomorphism and Hamiltonian constraints that arise from the coupling of the string to 2-dimensional gravity. Given the embedding of the closed string $X: S^1\to \PS$ at a given time $\tau$, its dynamics is entirely contained  in the implementation of the constraints \cite{Tseytlin:1990nb,Tseytlin:1990va,Freidel:2013zga,Freidel:2014qna,Freidel:2015pka}
\be\label{St}
\HH(\pa_{\tau} X,\pa_{\tau} X)=0,\qquad \eta(\pa_{\tau} X,\pa_{\tau} X)=0,\qquad \pa_\tau X =J \pa_\sigma X, 
\ee 
where $\sigma$ is the string coordinate. From this perspective the Born structure generalizes the Riemannian geometry $(M,g)$ which is needed to define the dynamics of a relativistic particle $x \in M$. The string equations (\ref{St}) generalize the relativistic constraint $ g(\pa_\tau x,\pa_\tau x)=m^2$ which defines usual relativistic dynamics.


This chiral perspective where Born geometry is defined  as a chiral structure $(\eta,\HH,J)$ augmented by a compatible two-form $\omega$ can be formalized in the following way:
\begin{Lem}
The  Born structure $(\eta,\omega,\HH)$ is equivalent to the chiral triple $(\eta,\HH,J)$ satisfying
\begin{equation}
\begin{aligned}
J&=\eta^{-1}\HH, & J^{2}&=\id,\\
\eta(JX,JY)&=\eta(X,Y), \qquad& \HH(JX,JY)&=\HH(X,Y),
\end{aligned}
\end{equation}
along with a compatible almost symplectic form $\omega$
\begin{align}
\omega^{-1}\eta &= \eta^{-1}\omega,& \omega^{-1}\HH &= -\HH^{-1}\omega.
\end{align}
\end{Lem}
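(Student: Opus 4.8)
The plan is to prove the equivalence directly, by translating each defining relation of the Born structure into the corresponding relation of the chiral triple (and back) through the single key identity $\eta J = \HH$, where $J := \eta^{-1}\HH$. Viewing $\eta,\omega,\HH$ as maps $T\PS \to T^*\PS$, this identity reads $\eta(JX,Y) = \HH(X,Y)$ at the level of bilinear forms; since $\HH$ is symmetric, it immediately exhibits $J$ as self-adjoint with respect to $\eta$, i.e. $\eta(JX,Y) = \eta(X,JY)$. This self-adjointness is the one structural fact on which the rest of the chiral data is built.

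\textbf{The chiral part.} First I would establish the chiral structure. The first Born compatibility $\eta^{-1}\HH = \HH^{-1}\eta$ is, after multiplying through by $\eta^{-1}$ and $\HH$, exactly the statement $(\eta^{-1}\HH)^2 = \id$, i.e. $J^2 = \id$; reading this chain in either direction gives the equivalence of the two conditions. Combining $J^2 = \id$ with the self-adjointness above yields both invariance properties at once: $\eta(JX,JY) = \eta(X,J^2 Y) = \eta(X,Y)$ and, using $\HH(X,Y) = \eta(JX,Y)$ again, $\HH(JX,JY) = \eta(X,JY) = \HH(X,Y)$. Thus the two invariance relations listed for the chiral triple are not independent data but automatic consequences of $J = \eta^{-1}\HH$ together with $J^2 = \id$, so in the forward direction they require no separate verification and in the reverse direction they are consistent with the given data.

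\textbf{The two-form.} Next I would handle $\omega$. The condition $\omega^{-1}\eta = \eta^{-1}\omega$ is precisely the para-Hermitian compatibility: setting $K := \eta^{-1}\omega$, it gives $K = \omega^{-1}\eta$ as well, whence $K^2 = \eta^{-1}\omega\,\omega^{-1}\eta = \id$; and since $\omega$ is almost symplectic (hence skew) one finds $\eta(KX,Y) = \omega(X,Y) = -\omega(Y,X) = -\eta(X,KY)$, so $K$ is skew-orthogonal and $\eta(KX,KY) = -\eta(X,Y)$. Conversely, skew-orthogonality of $K$ forces $\omega = \eta K$ to be skew and $\omega^{-1}\eta = K = \eta^{-1}\omega$. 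The remaining condition $\omega^{-1}\HH = -\HH^{-1}\omega$ is literally shared between the two presentations, so nothing is needed there. Assembling these pieces proves both implications.

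\textbf{Expected obstacle.} I do not expect a genuine obstacle here, since everything reduces fibrewise to linear algebra; the only care required is bookkeeping of adjoints when passing between the ``maps to $T^*\PS$'' picture and the bilinear-form picture, and noting explicitly that positive-definiteness of $\HH$ and the split signature of $\eta$ are assumed as part of both data sets rather than derived. The mild subtlety worth flagging is that two of the four chiral-triple relations are redundant, so the actual content of the lemma is the pair of equivalences $\{\eta^{-1}\HH = \HH^{-1}\eta\} \Leftrightarrow \{J^2 = \id\}$ and $\{\text{para-Hermitian}\} \Leftrightarrow \{\omega^{-1}\eta = \eta^{-1}\omega,\ \omega \text{ skew}\}$.
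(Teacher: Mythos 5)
Your proof is correct and takes essentially the same route as the paper's: the forward direction is the chiral-structure computation the paper carries out earlier in Section 3 (where $\eta J=\HH$ together with $J^2=\id$ makes the two invariance relations automatic), and the converse is precisely the paper's observation that $\omega^{-1}\eta=\eta^{-1}\omega$ makes $K=\eta^{-1}\omega$ a para-Hermitian structure compatible with $\eta$. Your fibrewise bookkeeping and the remark that the invariance relations are redundant data simply spell out what the paper's terser proof leaves implicit.
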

\begin{proof}
We have already shown that starting with a para-Hermitian manifold and a compatible Riemannian metric $\HH$ yields the compatible chiral structure. The converse statement follows from observing that $\omega^{-1}\eta=\eta^{-1}\omega$ implies $K=\eta^{-1}\omega$ is a para-Hermitian structure compatible with $\eta$.
\end{proof}

The chiral structure $J$ plays an important role in subsequent constructions. For this reason, we introduce the chiral subspaces $C_\pm$ which are the $\pm 1$-eigenbundles of $J$ analogous to the eigenbundles $L$ and $\Lt$ of $K$. Together these subspaces span the tangent space of our Born manifold
\begin{equation}
T\PS=C_+\oplus C_-.
\end{equation}
We now also have another set of projection operators onto these subspaces and the corresponding components of vector fields
\begin{equation}\label{eq:chiral-projections}
P_\pm=\frac{1}{2}(\id\pm J) \qandq X_\pm=P_\pm (X).
\end{equation}
The compatibility conditions of $\eta$ and $J$ further imply that the eigenbundles $C_\pm$ are orthogonal with respect to $\eta$
\begin{equation}\label{eq:Cpm-orthogonal-property}
\eta(X_\pm,Y_\mp)=0,
\end{equation}
for any $X_\pm,Y_\pm \in \se(C_\pm)$. The following property that will be used later repeatedly:
\begin{Lem}\label{lem:Kmap-Cpm}
The para-Hermitian structure $K$ on a Born manifold maps the chiral subspaces into each other isomorphically,
\begin{equation}
K:C_\pm\rightarrow C_\mp
\end{equation}
such that $K(X_\pm)=[KX]_\mp$.
\end{Lem}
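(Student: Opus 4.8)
The plan is to derive everything from the single para-quaternionic anticommutation relation $\{J,K\}=JK+KJ=0$ established above, together with $K^2=\id$; no other structure is needed. First I would recast the claim as the statement that $K$ intertwines the chiral projectors in the expected way, namely $KP_\pm=P_\mp K$. Since $P_\pm=\tfrac12(\id\pm J)$, a one-line computation gives $KP_\pm=\tfrac12(K\pm KJ)=\tfrac12(K\mp JK)=P_\mp K$, where the middle step uses $KJ=-JK$. Applying this operator identity to an arbitrary $X\in\XX(\PS)$ and recalling that $X_\pm=P_\pm X$ and $[KX]_\mp=P_\mp(KX)$ immediately yields the asserted formula $K(X_\pm)=[KX]_\mp$.

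Second, to confirm that $K$ genuinely sends $C_\pm$ into $C_\mp$ I would verify the eigenvalue flip directly. If $X_\pm\in\se(C_\pm)$, so that $JX_\pm=\pm X_\pm$, then anticommutation gives $J(KX_\pm)=-K(JX_\pm)=\mp KX_\pm$, so $KX_\pm$ lies in the $\mp1$-eigenbundle $C_\mp$; this is of course just the intertwining identity evaluated on an eigenvector, since $P_\mp(KX_\pm)=K(P_\pm X_\pm)=KX_\pm$. Finally, isomorphy follows from $K^2=\id$: $K$ is its own inverse on $T\PS$, so the restrictions $K\colon C_+\to C_-$ and $K\colon C_-\to C_+$ are mutually inverse linear bundle maps and hence isomorphisms.

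There is essentially no analytic obstacle here; the entire content is bookkeeping of the para-quaternionic algebra, and the only place to be careful is the sign generated when moving $K$ past $J$. For that reason I would present the intertwining identity $KP_\pm=P_\mp K$ as the organizing computation, since both the mapping property $K\colon C_\pm\to C_\mp$ and the explicit formula $K(X_\pm)=[KX]_\mp$ can be read off from it at once, with $K^2=\id$ supplying invertibility.
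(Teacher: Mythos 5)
Your proof is correct and follows essentially the same route as the paper, which likewise deduces $K(X_\pm)=[KX]_\mp$ directly from $\{J,K\}=0$ applied to the projectors $P_\pm=\tfrac12(\id\pm J)$ and obtains isomorphy from the invertibility of $K$. Your writeup merely makes explicit the intertwining identity $KP_\pm=P_\mp K$ that the paper leaves as an ``immediate'' step.
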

\begin{proof}
Because $\{J,K\}=0$, $K(X_\pm)=[KX]_\mp$ follows immediately from \eqref{eq:chiral-projections}. Since $K$ is clearly invertible, the map is an isomorphism.
\end{proof}

Finally, it is well-known from works on geometric quantization and the geometry of quantum mechanics that the geometrical structure which underlies  the process of quantization is a Hermitian structure $(\omega,\HH, I)$. In this context, $\PS$ can be understood as a phase space (the phase space of a relativistic particle say), the symplectic structure defines the notion of Hamiltonian flow on the phase space while the complex structure on $\PS$ is naturally associated with coherent states, i.e. states that minimize the uncertainty relation. Conversely as shown in geometric quantization the choice of a K\"ahler structure on $\PS$  uniquely determines an Hilbert space representation in terms of coherent states defined as holomorphic sections over $\PS$.

The key relations between the structures of Born geometry are summarized in Table \ref{tab:BornGeo}.

\begin{table}[h!]
\renewcommand{\arraystretch}{1.7}
\centering
\resizebox{0.95\textwidth}{!}{%
\begin{tabular}{@{}ccc@{}}
\toprule
$I={\HH}^{-1}{\omega}=-\omega^{-1}\HH$ & $J={\eta}^{-1}{\HH}={\HH}^{-1}{\eta}$ & $K={\eta}^{-1}{\omega}={\omega}^{-1}{\eta}$  \vspace{8pt} \\
$-I^2=J^2=K^2=\id$                     & $\{I,J\}=\{J,K\}=\{K,I\}=0$           & $IJK=-\id$     \vspace{8pt} \\
$\HH(IX,IY)=\HH(X,Y)$              & $\eta(IX,IY)= -\eta(X,Y)$         & $\omega(IX,IY)=\omega(X,Y)$             \\
$\HH(JX,JY)=\HH(X,Y)$              & $\eta(JX,JY)= \eta(X,Y)$          & $\omega(JX,JY)=-\omega(X,Y)$            \\
$\HH(KX,KY)=\HH(X,Y)$              & $\eta(KX,KY)= -\eta(X,Y)$         & $\omega(KX,KY)=-\omega(X,Y)$            \\ \bottomrule
\end{tabular}%
}
\caption{Summary of structures in Born geometry.}
\label{tab:BornGeo}
\end{table}

\subsection{Structure Groups}
Each of the three structures $(\eta,\omega,\HH)$ on the manifold $\PS$ corresponds to a structure group on the bundle $T\PS$. The compatibility between the objects then leads to a reduction of the overall structure group of Born geometry to the orthogonal group $O(d)$. The symplectic form $\omega$ comes with structure group $Sp(2d)$ while the metrics $\eta$ and $\HH$ have structure group $O(d,d)$ and $O(2d)$ respectively. 
One of the main reasons behind the definition of Born geometry is the fact that there is a common frame that diagonalizes all  structures simultaneously. Note that in the following we are implicitly using the identification of $\Lt$ with $L^*$ established in Remark \ref{rem:canonicalform}.
\begin{Prop}\label{BG}
The triple $(\eta,\omega,\HH)$ is a Born structure on $\PS$ if and only if there exists a frame $E\in GL(2d)$ such that $(\eta,\omega,\HH)$ can be written as $\eta = E^\T \bar{\eta} E$,  $\omega =E^\T\bar\omega E$ and $\HH=E^\T\bar{\HH} E$ with $(\bar\eta,\bar\omega,\bar\HH)$ given by 
\begin{align}
\bar\eta &= 
\begin{pmatrix}
	0&\id\\
	\id&0
\end{pmatrix} \, , &
\bar\omega &=
\begin{pmatrix}
	0&-\id\\
	\id&0
\end{pmatrix} \, , &
\bar\HH =
\left(
\begin{array}{cc}
	\delta &0\\
	0&\delta^{-1}
\end{array}
\right) \, , \label{m}
\end{align}
where $\delta$ is the constant metric $\delta = \mathrm{diag}( 1,\dots,1)$ on $L$ and its inverse $\delta^{-1}$ is a metric on $\Lt$. Moreover, the frame field $E$ is only determined up to an $O(d)$ transformation $o$: $E \mapsto  \mathrm{diag}(o,o) E$.
\end{Prop}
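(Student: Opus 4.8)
The plan is to prove both implications of the equivalence, treating the converse (existence of the frame $\Rightarrow$ Born structure) as a short verification and the forward direction (Born structure $\Rightarrow$ existence of the frame) as the substantive part, and to finish by computing the residual gauge group. For the easy implication I would first note that the Born compatibility conditions $\eta^{-1}\HH=\HH^{-1}\eta$ and $\omega^{-1}\HH=-\HH^{-1}\omega$ are frame-covariant: writing $\eta=E^\T\bar\eta E$, $\omega=E^\T\bar\omega E$, $\HH=E^\T\bar\HH E$ one gets $\eta^{-1}\HH=E^{-1}(\bar\eta^{-1}\bar\HH)E$ and likewise $\HH^{-1}\eta=E^{-1}(\bar\HH^{-1}\bar\eta)E$, and similarly for the $\omega$-relation, so the identities hold for $(\eta,\omega,\HH)$ iff they hold for $(\bar\eta,\bar\omega,\bar\HH)$. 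It then suffices to check $\bar\eta^{-1}\bar\HH=\bar\HH^{-1}\bar\eta$ and $\bar\omega^{-1}\bar\HH=-\bar\HH^{-1}\bar\omega$ by a direct $2\times2$ block multiplication, using $\bar\eta^{-1}=\bar\eta$, $\bar\omega^{-1}=\begin{pmatrix}0&\id\\-\id&0\end{pmatrix}$ and $\bar\HH^{-1}=\mathrm{diag}(\delta^{-1},\delta)$; both sides produce the same off-diagonal blocks $\delta^{-1}$ and $\pm\delta$, so the relations hold.

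For the forward direction I would build $E$ in two stages. Since $(\PS,\eta,\omega,K)$ is para-Hermitian, Remark \ref{rem:canonicalform} (following \cite{bejan1993existence}) supplies a local frame $E_0$ in which $\eta,\omega,K$ simultaneously take the standard forms $\bar\eta,\bar\omega$ and $\bar K=\mathrm{diag}(\id,-\id)$. In this frame I represent $\HH$ by a symmetric positive-definite block matrix $\begin{pmatrix}a&b\\b^\T&c\end{pmatrix}$ and exploit the para-quaternionic relations already established for any Born structure. From $\{J,K\}=0$ with (in this frame) $J=\bar\eta\HH=\begin{pmatrix}b^\T&c\\a&b\end{pmatrix}$ one computes $JK+KJ=\begin{pmatrix}2b^\T&0\\0&-2b\end{pmatrix}$, forcing $b=0$, so $\HH=\mathrm{diag}(a,c)$; and from the $\eta$--$\HH$ compatibility in the equivalent form $J^2=\id$ one gets $\mathrm{diag}(ca,ac)=\id$, i.e. $c=a^{-1}$ with $a$ symmetric positive-definite on $L$.

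In the second stage I use the residual freedom stabilizing the para-Hermitian data, namely the block-diagonal transformations $S=\mathrm{diag}(A,(A^{-1})^\T)$ with $A\in GL(d)$ (these are exactly the elements commuting with $\bar K$ and preserving $\bar\eta,\bar\omega$). Under $E_0\mapsto SE_0$ the matrix of $\HH$ becomes $(S^{-1})^\T\HH_0 S^{-1}$, whose upper block is $(A^{-1})^\T a\,A^{-1}$; choosing $A$ with $A^\T A=a$—which exists because $a$ is positive-definite, e.g. $A=a^{1/2}$—normalizes $\HH$ to $\mathrm{diag}(\id,\id)=\mathrm{diag}(\delta,\delta^{-1})$ with $\delta=\id$, while leaving $\bar\eta,\bar\omega,\bar K$ fixed. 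The composite $E=SE_0$ is the desired frame. To pin down its ambiguity I would then compute the full stabilizer of $(\bar\eta,\bar\omega,\bar\HH)$ in $GL(2d)$: preserving $\bar\eta$ and $\bar\omega$ forces $S$ to commute with $\bar K=\bar\eta^{-1}\bar\omega$ and hence to be block-diagonal with lower block $(A^{-1})^\T$, while imposing in addition $S^\T\bar\HH S=\bar\HH$, i.e. $S\in O(2d)$ since $\bar\HH=\id_{2d}$, collapses this to $A^\T A=\id$, so that $A=o\in O(d)$ and $(A^{-1})^\T=o$. Thus the stabilizer is exactly $\{\mathrm{diag}(o,o):o\in O(d)\}$, giving the stated freedom $E\mapsto\mathrm{diag}(o,o)E$.

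The main obstacle I anticipate is not any single hard computation but keeping the bookkeeping of the stabilizer honest across the two normalization stages: one must check that the $GL(d)$ surviving after fixing the para-Hermitian frame is precisely what is needed to absorb $a$ (here positive-definiteness of $\HH$ is essential, guaranteeing the square root $A=a^{1/2}$), and that the final normalization of $\HH$ is exactly what cuts $GL(d)$ down to the diagonal $O(d)$. A secondary point requiring care is smoothness: one should confirm that the frame of \cite{bejan1993existence} can be chosen smooth and that $a^{1/2}$ depends smoothly on the base point, so that $E$ is a genuine frame field and not merely a pointwise linear-algebra datum.
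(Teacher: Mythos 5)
Your proof is correct and follows essentially the same route as the paper's: both start from the para-Hermitian canonical frame of Remark \ref{rem:canonicalform}, force $\HH$ to be block-diagonal with mutually inverse blocks (your use of $\{J,K\}=0$ and $J^2=\id$ is equivalent to the paper's use of $\HH(KX,KY)=\HH(X,Y)$ and $\omega^{-1}\HH=-\HH^{-1}\omega$), and then absorb the remaining block by a transformation $\mathrm{diag}(A,(A^{-1})^\T)$ that preserves $\bar\eta$ and $\bar\omega$. One minor improvement on your side: your explicit computation of the full stabilizer of $(\bar\eta,\bar\omega,\bar\HH)$ actually proves the ``only determined up to $O(d)$'' clause, whereas the paper only checks the easy direction that $\mathrm{diag}(o,o)$ preserves the canonical forms.
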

\begin{proof}
By direct inspection $(\bar\eta,\bar\omega,\bar\HH)$ defines a Born structure and it is easy to check that an $O(d)$ transformation leaves $(\bar\eta,\bar\omega,\bar\HH)$ invariant. Therefore any triple  $(\eta,\omega,\HH) =E^\T(\bar\eta,\bar\omega,\bar\HH)E$ also forms a Born structure.

Conversely, assume $(\eta,\omega,\HH)$ is a Born structure. We will show that there is a frame such that $(\eta,\omega,\HH)$ take the form \eqref{m}. We have noted in Remark \ref{rem:canonicalform} that on the tangent bundle of a para-Hermitian manifold there always exists a frame such that $\eta$ and $\omega$ take the desired form $\bar{\eta}$ and $\bar{\omega}$. This implies that in this frame $K=\text{diag}(\id,-\id)$. Next, we observe that $\HH(KX,KY)=\HH(X,Y)$ implies that $\HH$ is block-diagonal in this frame. Denote the diagonal components by $g(X,Y)=\HH(PX,PY)$ and $\tl{g}(X,Y)=\HH(\tl{P} X,\tl{P} Y)$, i.e. $\HH=\text{diag}(g,\tl{g})$. Writing out the compatibility condition $\HH^{-1}\omega=\omega^{-1}\HH$ explicitly in matrix form yields
\begin{align}
\begin{pmatrix}
0 & -g^{-1} \\
\tl{g}^{-1} & 0
\end{pmatrix}
=
\begin{pmatrix}
0 & -\tl{g} \\
g & 0
\end{pmatrix},
\end{align}
showing that $\tl{g}=g^{-1}$. Now we decompose $g$ into its frame fields, $g=e^\T\delta e$ and use the transformation of the type $E\mapsto \text{diag}(e^{-1},e^\T)E$ (which leaves $\bar{\eta}$ and $\bar{\omega}$ invariant), so that $\HH=\bar{\HH}$. Finally, one can check that the orthogonal change of frame given by $E\mapsto \text{diag}(o,o)E$, where $o^\T\delta o= \delta$, preserves the canonical form of $\bar{\eta},\bar{\omega}$ and $\bar{\HH}$, which completes the proof.
\end{proof}
As an immediate consequence we have the following corollary which gives a clear relationship between Born geometry and para-Hermitian geometry:
\begin{Cor}
Born geometry is equivalent to a choice of metric $g$ on $L$ such that in the splitting $T\PS=L\oplus \Lt$ we have
\begin{align}
\HH=
\begin{pmatrix}
g & 0 \\
0 & g^{-1}
\end{pmatrix}.
\end{align}
\end{Cor}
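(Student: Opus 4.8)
The plan is to deduce the corollary from the computations already carried out in the proof of Proposition \ref{BG}, treating the para-Hermitian data $(\eta,\omega)$—and hence the structure $K=\omega^{-1}\eta$ together with the splitting $T\PS=L\oplus\Lt$—as fixed, and establishing a bijection between Born structures extending $(\eta,\omega)$ and Riemannian metrics $g$ on $L$.

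For the forward direction I start from a Born structure $(\eta,\omega,\HH)$ and first show that $\HH$ is block-diagonal in the splitting. Choosing a frame with $\eta=\bar\eta$, $\omega=\bar\omega$ (Remark \ref{rem:canonicalform}), so that $K=\text{diag}(\id,-\id)$, I use the relation $\HH(KX,KY)=\HH(X,Y)$ from Table \ref{tab:BornGeo}: for $X\in\se(L)$ and $Y\in\se(\Lt)$ one has $KX=X$, $KY=-Y$, whence $\HH(X,Y)=\HH(KX,KY)=-\HH(X,Y)$, forcing $\HH(L,\Lt)=0$. Thus $\HH=\text{diag}(g,\tilde g)$ with $g=\HH|_L$ and $\tilde g=\HH|_{\Lt}$, both positive-definite since $\HH$ is Riemannian. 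Imposing the remaining compatibility $\omega^{-1}\HH=-\HH^{-1}\omega$ in matrix form—exactly the computation in the proof of Proposition \ref{BG}—yields $\tilde g=g^{-1}$ once $\eta$ is used to identify $\Lt$ with $L^*$ (Remark \ref{rem:canonicalform}), under which $g^{-1}$ is a metric on $\Lt$. This gives $\HH=\text{diag}(g,g^{-1})$ and shows the assignment $\HH\mapsto g=\HH|_L$ is well-defined and lands in the Riemannian metrics on $L$.

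For the converse, given any Riemannian metric $g$ on $L$ I set $\HH:=\text{diag}(g,g^{-1})$ in the splitting. Positive-definiteness of $\HH$ is immediate from that of $g$ and $g^{-1}$, and the Born compatibilities $\eta^{-1}\HH=\HH^{-1}\eta$ and $\omega^{-1}\HH=-\HH^{-1}\omega$ reduce to the same matrix check against $\bar\eta,\bar\omega$ performed ``by direct inspection'' at the start of the proof of Proposition \ref{BG}. Since $\HH\mapsto\HH|_L$ and $g\mapsto\text{diag}(g,g^{-1})$ are manifestly inverse to one another, the claimed equivalence follows.

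The content here is entirely contained in Proposition \ref{BG}; the only point requiring care—rather than a genuine obstacle—is the bookkeeping of the $\eta$-identification $\Lt\cong L^*$, which is what turns the restriction $\HH|_{\Lt}\colon\Lt\to\Lt^*$ into the inverse metric $g^{-1}$ and thereby makes the symbol ``$g^{-1}$ on $\Lt$'' meaningful.
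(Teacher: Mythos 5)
Your proposal is correct and follows essentially the same route as the paper, which presents the corollary as an immediate consequence of Proposition \ref{BG}: your forward direction reproduces the converse half of that proposition's proof (block-diagonality of $\HH$ from $\HH(KX,KY)=\HH(X,Y)$, then $\tilde g=g^{-1}$ from the matrix form of $\omega^{-1}\HH=-\HH^{-1}\omega$), and your converse is the ``direct inspection'' half. Your explicit attention to the $\eta$-identification $\Lt\cong L^*$, which the paper handles via Remark \ref{rem:canonicalform}, is a welcome clarification but not a different argument.
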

This corollary expresses mathematically one of the key ideas behind the Born geometry interpretation of string geometry. It essentially stipulates that one can interpret the Born geometry as a para-Hermitian structure together with a ``Riemannian'' metric on its Lagrangian. In this interpretation, the $\HH$ metric is always diagonal and the rest of the geometry that includes fluxes is encoded into the choice of the bracket.

A way to understand the result of Proposition \ref{BG} is to look at the pairwise intersection of the structure groups $O(2d), O(d,d)$ and $Sp(2d)$. On a para-Hermitian manifold with compatible $\eta$ and $\omega$ the structure group is 
\begin{equation}
  Sp(2d) \cap O(d,d) = GL(d)
\end{equation}
which corresponds to the real structure $K=\eta^{-1}\omega$ providing the splitting into $d$-dimensional subspaces. One can think of this intersection as the structure group of general relativity, i.e. the change of frames. The embedding  $k: GL(d)\to Sp(2d) \cap O(d,d)$ is explicitly given by 
\be
 k(e) = \left(
\begin{array}{cc}
	e&0\\
	0&(e^\T)^{-1}
\end{array}
\right).
\ee
The symplectic form $\omega$ together with the compatible metric $\HH$ defines a complex structure $I=\HH^{-1}\omega $ and we have
\begin{equation}
  Sp(2d) \cap O(2d) = U(d)
\end{equation}
which can be seen as the structure group of quantum mechanics where the complex structure $I$ facilitates the Born reciprocity $x\rightarrow p, p\rightarrow -x$. 
The explicit embbeding $\imath: U(d) \to Sp(2d) \cap O(2d)$ is given by 
\be
\imath(A+iB) =   \left(
\begin{array}{cc}
	A&B\delta^{-1}\\
	-\delta B& \delta A\delta^{-1}
\end{array}
\right).
\ee
Note that orthogonality is defined with respect to flat metric $\delta$ in \eqref{m}: $A^\T \delta A = \delta$, whereas the unitarity condition is
\be
(A-iB)^\T \delta (A+i B)= \delta,
\ee
i.e. $U=A+iB$ preserve the sesquilinear form $\delta$. 
Further note that one also has $Sp(2d) \cap GL(d,\mathbb{C}) = U(d)$ and $O(2d) \cap GL(d,\mathbb{C}) = U(d)$ which is the ``2-out-of-3'' property of the unitary group, i.e. any two of these three groups intersect to give the unitary group and imply the presence of the third group. This corresponds to the compatibility of the triple $(\omega,\HH,I)$.

If one only considers the compatible two metrics $\eta$ and $\HH$, one obtains the string symmetry group
\begin{equation}
  O(d,d) \cap O(2d) = O(d)\times O(d),
\end{equation}
with explicit embedding $\jmath: O(d)\times O(d) \to  O(d,d) \cap O(2d)$ given by 
\be
\jmath(\OO_+,\OO_-)=  \frac12 \left(
\begin{array}{cc}
	(\OO_++\OO_-)&(\OO_+-\OO_-)\\
	(\OO_+-\OO_-)& (\OO_++\OO_-)
\end{array}
\right),
\ee
where $\OO_\pm\in O(d)$, i.e. $\OO_\pm^\T\delta \OO_\pm=\delta$. The symmetries $\OO_\pm$ correspond to individual Lorentz groups associated to the right and left moving sectors of the closed string which are respectively in the image of $P_\pm=\frac12(\id \pm J)\in C_\pm$.

Finally we can look at the intersection of all three groups giving the structure group of Born geometry
\begin{equation}
  Sp(2d) \cap O(d,d) \cap O(2d) = O(d) .
\label{eq:structuregroup}
\end{equation}
This is of course the structure group of Riemannian geometry. 
This shows that the structures of general relativity, quantum mechanics and string theory all appear in Born geometry with the Lorentz group as their common subgroup.

\section{The Born Connection}
We are now ready to formulate and prove our main theorem on the existence and uniqueness of a connection for Born geometry. We start with the definition of compatibility.
\begin{Def}
Let $(\PS,\eta,\omega,\Hh)$ be a Born geometry. A connection $\n$ that preserves the three defining structures of the Born geometry,
\begin{equation}
\n\eta = \n\Hh = \n\omega = 0,
\end{equation}
will be called {compatible with the Born geometry} $(\PS,\eta,\omega,\Hh)$.
\end{Def}

\subsection{Main Theorem}
Now we can concisely state our main  theorem:
\begin{Thm}\label{th:BornConnection}
Let $(\PS,\eta,\omega,\Hh)$ be a Born geometry with $K=\eta^{-1}\omega$ the corresponding almost para-Hermitian structure. Then there exists a unique connection called the Born connection denoted by $\nB$ which
\begin{itemize}
  \item is compatible with the Born geometry $(\PS,\eta,\omega,\Hh)$,
  \item has a vanishing generalized torsion $\Th=0$.
\end{itemize}
This connection can be explicitly expressed in terms of the three defining structures $(\eta,\omega,\Hh)$ of the Born geometry and the canonical D-bracket. It can be concisely written in terms of the para-Hermitian structure $K$ and the chiral projections $X_\pm=P_\pm(X)$ as
\begin{equation}
\nB_XY = \bl X_-,Y_+ \br_+ + \bl X_+,Y_-\br_- + (K\bl X_+,KY_+ \br)_+ + (K\bl X_-,KY_-\br)_- .
\label{eq:BornBracket}
\end{equation}
\end{Thm}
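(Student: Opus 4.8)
The plan is to prove existence and uniqueness separately, leveraging the machinery already developed for the canonical connection and the generalized torsion. The crucial structural fact I would exploit is that a Born-compatible connection must simultaneously preserve the full para-quaternionic triple $(I,J,K)$ -- since preserving any two of $\eta,\omega,\HH$ forces preservation of all three endomorphisms via $K=\eta^{-1}\omega$, $J=\eta^{-1}\HH$, $I=\HH^{-1}\omega$. In particular $\n J=0$ means the connection preserves the chiral splitting $T\PS=C_+\oplus C_-$, so $\n_X Y_\pm\in\se(C_\pm)$, and $\n K=0$ means it preserves $L\oplus\Lt$. This pins down each of the four ``chiral-polarized'' blocks of $\n_XY$ independently, which is precisely the structure displayed in the proposed formula \eqref{eq:BornBracket}.

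\paragraph{Uniqueness} For uniqueness I would suppose $\nB$ and $\nBp$ are two Born-compatible connections with vanishing generalized torsion, and set $\eta(\nB_XY-\nBp_XY,Z)=\Phi(X,Y,Z)$. Metric compatibility $\n\eta=0$ makes $\Phi$ skew in $(Y,Z)$; compatibility with $J$ (and $K$) forces $\Phi$ to be ``diagonal'' in the chiral grading, i.e. $\Phi(X,Y_\pm,Z_\mp)=0$ and similarly for the $K$-grading. Vanishing generalized torsion for both connections gives, via \eqref{eq:gen-torsion} and the cyclic formula \eqref{cyclic}, that $\cyc\Phi(X,Y,Z)=0$. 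The key step is then to argue that a fully $\eta$-skew tensor, constrained by both the $J$- and $K$-compatibility to lie in a specific bigrading, whose cyclic sum vanishes, must itself vanish. This mirrors the uniqueness argument for the canonical D-bracket, where a totally skew three-form annihilated on the $P$- and $\Pt$-diagonal blocks is forced to be zero; here the additional chiral grading from $J$ should eliminate the remaining components.

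\paragraph{Existence} For existence I would verify directly that the connection defined by \eqref{eq:BornBracket} has the required properties. First I would check it is a genuine connection (the Leibniz rule in $Y$), which follows from the Leibniz properties \eqref{eq:DbracketLeibniz1} and \eqref{eq:DbracketLeibniz2} of the canonical D-bracket together with the fact that $K$ and the projectors $P_\pm$ are tensorial; the anomalous $\eta(X,Y)\mathcal{D}[f]$ terms should cancel against each other across the four blocks using $\eta(X_\pm,Y_\pm)$ orthogonality \eqref{eq:Cpm-orthogonal-property} and Lemma \ref{lem:Kmap-Cpm}. Next I would confirm compatibility: by construction $\nB_XY_+\in\se(C_+)$ and $\nB_XY_-\in\se(C_-)$ so $\n J=0$, and the interplay $K:C_\pm\to C_\mp$ from Lemma \ref{lem:Kmap-Cpm} applied to the $(K\bl X_+,KY_+\br)_+$ terms should yield $\n K=0$; together these give $\n\eta=\n\omega=\n\HH=0$. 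Finally, vanishing generalized torsion: I would compute $\eta(\bl X,Y\br^{\nB},Z)$ from \eqref{eq:gen-torsion} and show it reproduces the canonical D-bracket, using the generalized integrability \eqref{Generalisedint} of $L,\Lt$ (i.e.\ $\bl PX,PY\br$ has no complementary component) to collapse the bracket terms.

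\paragraph{Main obstacle} The hard part will be the generalized-torsion computation in the existence half: one must carefully expand $\bl X,Y\br^{\nB}$ into its sixteen chiral-times-$K$ subblocks and show, term by term, that the symmetric and skew pieces recombine into the single canonical D-bracket $\bl X,Y\br$. This is where the Dirac-structure conditions \eqref{Generalisedint} and the precise sign structure of the para-quaternionic relations (Table \ref{tab:BornGeo}) must conspire exactly; verifying that the four-block ansatz \eqref{eq:BornBracket} is not merely torsion-free in the $(3,0)+(0,3)$ sense guaranteed for any para-Hermitian connection, but genuinely reproduces the \emph{canonical} D-bracket in all components, is the crux of the theorem.
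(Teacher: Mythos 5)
Your existence half is, in outline, the paper's own proof: direct verification of tensoriality and the Leibniz rule from the D-bracket axioms, compatibility block by block ($J$-compatibility read off from the chiral structure of \eqref{eq:BornBracket}, $K$-compatibility via Lemma \ref{lem:Kmap-Cpm}, the rest following from $\omega=\eta K$, $\HH=\eta J$), and then the torsion computation. Your worry about ``sixteen subblocks'' in the torsion computation is overstated: the paper organizes it through the two symmetry identities of Lemma \ref{lem:LemmaForTorsion} (which follow from the bracket axioms together with $\eta(C_+,C_-)=0$, e.g.\ from $\bl X,Y\br+\bl Y,X\br=\Db[\eta(X,Y)]$), after which the torsion collapses to $-\sum_\pm\eta(\NN_K(X_\pm,Y_\pm),Z_\pm)$ and vanishes by the D-bracket axiom $\NN_K=0$ --- equivalent to your \eqref{Generalisedint}, since $\NN_K$ vanishes automatically on mixed $P/\Pt$ arguments.

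Uniqueness is where you genuinely diverge, and where your sketch has one unjustified claim. The paper does not use a difference tensor at all: it takes an arbitrary compatible torsionless $\tilde\n$ and solves for each chiral block directly, using $\tilde\n J=0$ to discard terms like $\eta(\tilde\n_{Y_-}X_+,Z_-)$ and $\tilde\n K=0$ to convert $\eta(\tilde\n_{X_+}Y_+,Z_+)$ into $-\eta(\tilde\n_{X_+}KY_+,KZ_+)$, thereby re-deriving \eqref{eq:BornBracket} component by component --- so uniqueness and the explicit formula emerge from a single computation. Your contorsion route can be made to work, but not as stated: $\Phi$ is \emph{not} fully $\eta$-skew; the hypotheses only give skewness in the last two slots, and skew-in-last-two together with $\cyc\Phi=0$ does not force $\Phi=0$ (the cyclic symmetrization of such tensors has a large kernel of hook symmetry type), so the analogy with the D-bracket uniqueness proof --- where full skewness came from axiom \eqref{eq:DbracketNorm} --- breaks down. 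The correct completion uses exactly the constraints you list, in the following order: $J$-compatibility makes the last two slots chirally diagonal, so evaluating the cyclic identity on $(X_-,Y_+,Z_+)$ kills the two cycled terms (their last two slots are chirally mixed) and yields $\Phi(X_-,Y_+,Z_+)=0$, and likewise $\Phi(X_+,Y_-,Z_-)=0$; then $K$-compatibility gives
\begin{equation}
\Phi(X,KY,KZ)=-\Phi(X,Y,Z),
\end{equation}
and since $K$ exchanges $C_\pm$ (Lemma \ref{lem:Kmap-Cpm}), $\Phi(X_+,Y_+,Z_+)=-\Phi(X_+,KY_+,KZ_+)$ is a component of mixed type $(+,-,-)$ and vanishes, likewise $\Phi(X_-,Y_-,Z_-)$. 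With this fix your route is sound, and it buys something the paper's does not make explicit: it isolates which structure kills which ambiguity --- the surviving pure-chiral components $\Phi(X_\pm,Y_\pm,Z_\pm)$ are precisely of the type of the ambiguity $\chi$ of the generalized Levi-Civita connection discussed in Section 5, and it is exactly $K$- (i.e.\ $\omega$-) compatibility that eliminates them. Conversely, the paper's direct derivation buys the formula itself rather than assuming it as an ansatz.
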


The Born connection can be expressed in various different forms. Two useful ones are derived in the appendix and given here. It is important to note that the notation we use means $KX_\pm:=K(X_\pm)$  which is an element of $C_\mp$, cf. Lemma \ref{lem:Kmap-Cpm}.
\begin{Cor}
By contracting with a third vector $Z$ and using \eqref{eq:etaKK}, the concise expression \eqref{eq:BornBracket} for the Born connection in terms of projected components  and the bracket can be written as 
\begin{equation}
\begin{aligned}
\eta(\nB_XY,Z) &= \eta(\bl X_-,Y_+ \br_,Z_+)  + \eta( \bl X_+,Y_-\br,Z_-) \\
	&\qquad  - \eta( \bl X_+,KY_+\br, KZ_+) - \eta(\bl X_-,KY_-\br) ,KZ_-) 
\end{aligned}
\label{eq:BornExpanded1}
\end{equation}
and subsequently expanded explicitly in terms of $(I,J,K)$ as
\begin{equation}
\begin{aligned}
\eta(\nB_XY,Z)&= \frac14\Big\{ 
	\eta\big( \bl X,Y \br - \bl JX,JY \br, Z\big) + \eta\big( \bl X,JY \br - \bl JX,Y \br, JZ\big) \\
	&\qquad - \eta\big( \bl X,KY \br - \bl JX,IY \br, KZ\big) - \eta\big( \bl X,IY \br - \bl JX,KY \br, IZ\big) \Big\}  .
\end{aligned}
\label{eq:BornExpanded2}
\end{equation}
\end{Cor}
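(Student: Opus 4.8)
The plan is to establish both reformulations by purely algebraic manipulation of the concise expression \eqref{eq:BornBracket} supplied by the Theorem, using only the chiral decomposition $T\PS = C_+\oplus C_-$ from \eqref{eq:chiral-projections} together with the compatibility relations collected in Table \ref{tab:BornGeo}. No new geometric input is required; the content is the correct bookkeeping of projectors and signs.

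First I would derive \eqref{eq:BornExpanded1}. Contracting \eqref{eq:BornBracket} against $\eta$ with a vector $Z$ gives four terms. The first two projected brackets lie in $C_+$ and $C_-$ respectively, so by the orthogonality \eqref{eq:Cpm-orthogonal-property} the contraction with $Z$ only sees the matching component $Z_\pm$; the same orthogonality lets me drop the outer projector on the bracket, since $\eta(\bl X_-,Y_+\br_+,Z)=\eta(\bl X_-,Y_+\br_+,Z_+)=\eta(\bl X_-,Y_+\br,Z_+)$, and similarly for the second term. For the two $K$-terms I would use that $\{J,K\}=0$ forces $P_+K=KP_-$ (and $P_-K=KP_+$), so that $(K\bl X_+,KY_+\br)_+ = K\big(\bl X_+,KY_+\br_-\big)$; the skew-orthogonality \eqref{eq:etaKK}, $\eta(KA,B)=-\eta(A,KB)$, then moves $K$ onto $Z_+$ and produces both the minus sign and the factor $KZ_+$ appearing in \eqref{eq:BornExpanded1}, where a final use of \eqref{eq:Cpm-orthogonal-property} removes the residual projector. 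Collecting the four contributions yields \eqref{eq:BornExpanded1} verbatim.

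Next I would obtain \eqref{eq:BornExpanded2} from \eqref{eq:BornExpanded1} by inserting the explicit projectors $X_\pm=\tfrac12(X\pm JX)$ and $Z_\pm=\tfrac12(Z\pm JZ)$. To treat the $K$-slots I would first record the identity $KJ=-I$, which follows from reading off $IJK=-\id$, $I^2=-\id$, $J^2=K^2=\id$ and $\{J,K\}=0$ in Table \ref{tab:BornGeo}; this gives $KY_\pm=\tfrac12(KY\mp IY)$ and likewise $KZ_\pm=\tfrac12(KZ\mp IZ)$. Expanding each of the four terms of \eqref{eq:BornExpanded1} bilinearly produces eight $\eta$-contractions per term. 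The expansion is organized by parity: if one assigns degree one to each occurrence of $J$ in the $X$-slot and to each occurrence of $I$ (as opposed to $K$) in the $Y$- and $Z$-slots, then contributions of odd total degree cancel between the two summands of each pair $(T_1,T_2)$ and $(T_3,T_4)$, while the even-degree contributions reinforce and collapse exactly into the four $\tfrac14$-weighted groups of \eqref{eq:BornExpanded2}; the relative sign of the $I$-terms is dictated by the sign in $KJ=-I$.

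The main obstacle is purely computational: the second reformulation requires expanding and recombining thirty-two bilinear $\eta$-contractions, and the cancellations rely on careful sign bookkeeping, most delicately the flip $KY_+\to KY_-$ governed by $KJ=-I$. There is no conceptual difficulty once the three operator identities $P_\pm K=KP_\mp$, $\eta(KA,B)=-\eta(A,KB)$ and $KJ=-I$ are in hand; the only real risk is arithmetic error.
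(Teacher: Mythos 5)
Your proposal is correct and follows essentially the same route the paper intends: the paper justifies this corollary only by the phrase ``contracting with a third vector $Z$ and using \eqref{eq:etaKK}'' (the appendix takes \eqref{eq:BornExpanded2} as its starting point rather than deriving it), and your steps --- dropping projectors via the orthogonality \eqref{eq:Cpm-orthogonal-property}, moving $K$ onto $Z$ via $\eta(KA,B)=-\eta(A,KB)$ together with $P_\pm K=KP_\mp$, and then expanding with $X_\pm=\tfrac12(X\pm JX)$ and $KY_\pm=\tfrac12(KY\mp IY)$ from $KJ=-I$ --- are exactly that computation made explicit. Your parity bookkeeping reproduces the surviving terms and signs of \eqref{eq:BornExpanded2} correctly, so the proposal is complete as stated.
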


\begin{Cor}
\label{cor:BornContorsion}
The Born connection can be expressed as the canonical connection $\nc$ plus a contorsion term $\OB$ 
\begin{equation}
\eta(\nB_XY,Z) = \eta(\nc_XY,Z) + \OB(X,Y,Z)
\label{eq:BornContorsion}
\end{equation}
which can be expanded in terms of derivatives of $\HH$ and $(I,J,K)$ as
\bea
\OB(X,Y,JZ) &:=& \frac{1}{2}\nc_X \HH(Y,Z) \\
&+&  \frac{1}{4}\Big[\nc_Y\Hh(Z,X)+\nc_{KY}\Hh(KZ,X) 
+\nc_{JY}\Hh(JZ,X)+\nc_{IY}\Hh(IZ,X)\Big] \cr
&-&\frac{1}{4}\Big[\nc_{Z}\Hh(Y,X) + \nc_{KZ}\Hh(KY,X)+ \nc_{JZ}\Hh(JY,X)  + \nc_{IZ}\Hh(IY,X) 
 \Big] .\nonumber
\label{OB}
\eea

\end{Cor}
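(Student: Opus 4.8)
The plan is to isolate the contorsion $\OB(X,Y,Z):=\eta(\nB_XY-\nc_XY,Z)$ and to characterise it by a small system of linear relations that can be solved by a Koszul-type manipulation. Since both $\nB$ and $\nc$ preserve $\eta$, the difference tensor $\nB-\nc$ is $\eta$-skew in its last two slots, so from the outset $\OB$ is a genuine $(0,3)$-tensor with $\OB(X,Y,Z)=-\OB(X,Z,Y)$. The strategy is then to translate each defining property of the Born connection into an algebraic constraint on $\OB$ and to show that these constraints determine it uniquely and reproduce the stated expansion.

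First I would extract the compatibility relations. Writing $\HH(A,B)=\eta(JA,B)$ with $J$ self-adjoint for $\eta$, the condition $\nB\HH=0$ lets one replace $X[\HH(Y,Z)]$ by $\HH(\nB_XY,Z)+\HH(\nB_XZ,Y)$, and subtracting the analogous expansion of $\nc_X\HH$ gives
\begin{equation}
\nc_X\HH(Y,Z)=\OB(X,Y,JZ)+\OB(X,Z,JY).
\end{equation}
Thus, as a function of its last two entries twisted by $J$, the part of $\OB$ symmetric in $(Y,Z)$ equals $\nc\HH$; this already accounts for the leading term $\tfrac12\nc_X\HH(Y,Z)$ in the claimed formula. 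The same computation applied to $\omega(A,B)=\eta(KA,B)$, now using $\nc\omega=0=\nB\omega$ and the $\eta$-anti-self-adjointness of $K$, yields the auxiliary relation $\OB(X,Y,KZ)=\OB(X,Z,KY)$. Finally, because $\OB$ is exactly the contorsion of $\nB$ relative to the canonical connection, the generalized-torsion lemma gives $\Th_{\nB}(X,Y,Z)=\cyc\OB(X,Y,Z)$, so the vanishing of the generalized torsion is the cyclic condition $\cyc\OB(X,Y,Z)=0$.

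With skew-symmetry in the last two slots, the cyclic identity, and the two twisted-symmetry relations in hand, I would solve for $\OB$ in the same spirit as the Koszul formula for Levi-Civita, but now averaging over the para-quaternionic structures $\{\id,I,J,K\}$. The symmetric-in-$(Y,Z)$ piece is fixed by the $\HH$-relation, while the antisymmetric piece is reconstructed from the cyclic condition together with the $\omega$-relation; the role of the sum over $A\in\{\id,I,J,K\}$ is to project onto the component compatible with all three structures at once, which is precisely what converts the bare cyclic combination into the balanced expression appearing in $\OB(X,Y,JZ)$. I expect the main obstacle to be exactly this Koszul step: unlike the Riemannian case there are three independent metrics, so one must track how the relations interlock under the anticommutation rules $\{I,J\}=\{J,K\}=\{K,I\}=0$ and $IJK=-\id$, and verify that the quaternionic average is consistent rather than overdetermined.

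A cleaner alternative is to bypass the derivation and verify directly that $\nc+\OB$, with $\OB$ given by the stated formula, (i) preserves $\eta$, $\omega$ and $\HH$ and (ii) has vanishing generalized torsion; uniqueness from Theorem \ref{th:BornConnection} then forces $\nc+\OB=\nB$. Checking (i) reduces to re-deriving the $\HH$- and $\omega$-relations above in reverse from the explicit formula, while (ii) amounts to showing $\cyc\OB=0$, where the cyclic sum kills the $\nc\HH$ term by symmetry and the quaternionic averages cancel in pairs. This route trades the inventive Koszul combination for a longer but entirely mechanical set of index manipulations.
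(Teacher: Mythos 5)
Your strategy is sound but genuinely different from the paper's. The paper proves the corollary by direct computation: it starts from the already-established expansion \eqref{eq:BornExpanded2} of $\eta(\nB_XY,Z)$ in terms of D-brackets with $(I,J,K)$-rotated arguments, substitutes the definition of the canonical D-bracket through $\nc$, and converts the residual pieces using identities of the type $\eta([J\nc_XJ]Y,Z)=\nc_X\HH(Y,JZ)$ together with $\nc K=K\nc$ and the twist identities $\nc_X\HH(KY,Z)=\nc_X\HH(Y,KZ)$, $\nc_X\HH(JY,Z)=-\nc_X\HH(Y,JZ)$, $\nc_X\HH(IY,Z)=\nc_X\HH(Y,IZ)$. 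You instead characterize $\OB$ axiomatically, and your four constraints are all correctly derived: skewness in the last two slots from $\nB\eta=\nc\eta=0$, the relation $\nc_X\HH(Y,Z)=\OB(X,Y,JZ)+\OB(X,Z,JY)$ from $\nB\HH=0$, the relation $\OB(X,Y,KZ)=\OB(X,Z,KY)$ from $\nB\omega=\nc\omega=0$, and $\cyc\OB(X,Y,Z)=0$ from \eqref{cyclic} with $\Th_{\nB}=0$. These do pin $\OB$ down uniquely, since the difference of two solutions is the contorsion between two connections both satisfying the hypotheses of Theorem \ref{th:BornConnection}. Your route is more conceptual than the paper's — it explains why the $(Y,Z)$-symmetric part of $\OB(X,Y,JZ)$ must be $\tfrac{1}{2}\nc_X\HH(Y,Z)$ and why the antisymmetric remainder is dictated by torsion-freeness — at the price of leaning on the uniqueness theorem, which the paper's derivation does not need.

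The caveat is that neither of your two branches is actually executed, and this is where the real work lies. The Koszul-averaging branch is an ansatz that you yourself flag as the main obstacle, so it cannot count as a proof as written. The verification branch is logically complete in outline, but your claim that in $\cyc\OB$ ``the cyclic sum kills the $\nc\HH$ term by symmetry and the quaternionic averages cancel in pairs'' is too quick: the term $\tfrac{1}{2}\nc_X\HH(Y,JZ)$ does not vanish under cyclic summation on its own; it cancels against the bracketed terms only after repeated use of the three twist identities above (which in turn follow from $\nc K=K\nc$, $\nc\eta=0$ and the compatibility table), and one must also rewrite products such as $KJ=-I$ and $IJ=-K$ when substituting $Z\to JZ$. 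Spelling out those identities and the resulting cancellations is precisely the computation the paper performs in its appendix; once you do it, uniqueness from Theorem \ref{th:BornConnection} closes the argument exactly as you propose.
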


\subsection{Proof}
\label{sec:proof}
We will now prove the theorem and illuminate some of its properties along the way. We will also derive an expression for its generalized torsion (which subsequently vanishes). We begin by proving the existence of a connection which has the properties of the Born connection, i.e. it is compatible with the three structures of Born geometry and has vanishing generalized torsion. This is followed by a proof of uniqueness and we have a closer look at the torsion.

\subsubsection{Proof of Existence}
In this section we show that the expression \eqref{eq:BornBracket} indeed defines a connection and has the listed properties, i.e. it is compatible with all the structures of Born geometry and its generalized torsion vanishes. This expression is very similar to the construction of the Bismut connection given by Gualtieri in \cite{Gualtieri:2007bq}.
Theorem \ref{th:Born-LCconnection} will explain a posteriori why.

\paragraph*{Tensoriality and Leibniz property}
First, we start by showing that \eqref{eq:BornBracket} defines a connection on $T\PS$. Using the properties of the D-bracket,
\begin{equation}
\begin{aligned}
\nB_{fX}Y &= f\nB_XY -Y_+[f](X_-)_+-Y_-[f](X_+)_-\\
&\qquad -K(Y_+)[f](KX_+)_+- K(Y_-)[f](KX_-)_- \\
&\qquad  +\eta(X_-,Y_+)(\mathcal{D}f)_++\eta(X_+,Y_-)(\mathcal{D}f)_- \\
&\qquad + \eta(X_+,KY_+)(K\mathcal{D}f)_++\eta(X_-,KY_-)(K\mathcal{D}f)_-\\
&=f\nB_XY,
\end{aligned}
\end{equation}
where we made use of Lemma \ref{lem:Kmap-Cpm} along with the fact that $C_\pm$ are orthogonal with respect to $\eta$ \eqref{eq:Cpm-orthogonal-property} and complementary to each other. Similarly, we have
\begin{equation}
\begin{aligned}
\nB_{X}fY &= f\nB_XY +X_-[f](Y_+)_++X_+[f](Y_-)_- \\
&\qquad +X_+[f](K^2Y_+)_++X_-[f](K^2Y_-)_-\\
&=f\nB_XY+X[f]Y,
\end{aligned}
\end{equation}
which shows that $\nB$ indeed is a connection. Next, we 
will show the compatibility with the various Born structures.
\paragraph*{Compatibility with Born geometry} We first show the compatibility with $\eta$. It follows directly from the  compatibility \eqref{eq:DbracketComp} condition for the D-bracket
\be
X[\eta(Y,Z)] = \eta(\bl X,Y\br,Z)+\eta(Y,\bl X,Z\br).
\ee
Using this and the expansion \eqref{eq:BornExpanded1} one obtains that 
\begin{equation}
\begin{aligned}
\eta(\nB_XY,Z)+\eta(Y,\nB_XZ) &= X_-[\eta(Y_+,Z_+)]+X_+[\eta(Y_-,Z_-)]  \\
	&\qquad -X_+[\eta(KY_+,KZ_+)]-X_-[\eta(KY_-,KZ_-)] \\
	&= X[\eta(Y,Z)],
\end{aligned}
\end{equation}
The last equality follows from  $\eta(KX,KY)=-\eta(X,Y)$ and $\eta(X,Y)=\eta(X_+,Y_+) + \eta(X_-,Y_-)$. Next, we show the compatibility with the chiral structure $J$. This is equivalent to showing that $\nB$ preserves its eigenbundles, i.e. $\nB_XY_\pm\in \se(C_\pm)$. This property is obvious from the expression \eqref{eq:BornBracket}, since
\begin{align}
\nB_XY_\pm=\bl X_\mp,Y_\pm\br_\pm+(K\bl X_\pm,KY_\pm\br)_\pm \in \se(C_\pm).
\end{align}
Lastly, we show the compatibility with $K$. For this we again use Lemma \ref{lem:Kmap-Cpm} to write
\begin{align}
\nB_X(KY)=\bl X_-,KY_- \br_+ + \bl X_+,KY_+\br_- + (K\bl X_+,Y_-\br)_+ + (K\bl X_-,Y_+\br)_-,
\end{align}
and
\begin{align}
K(\nB_XY) = (K\bl X_-,Y_+ \br)_- + (K\bl X_+,Y_-\br)_+ + \bl X_+,KY_+\br_- + \bl X_-,KY_-\br_+,
\end{align}
so that $(\nB_XK)Y=\nB_X(KY)-K(\nB_XY)=0$. Now, since $\omega=\eta K$ and $\HH=\eta J$, the compatibility with the remaining structures of Born geometry follows.

\paragraph*{Vanishing generalized torsion}
Here we compute the generalized torsion of the Born connection and show that it indeed vanishes. We begin with the following lemma which is proven in the appendix
\begin{Lem}\label{lem:LemmaForTorsion}
The D-bracket of chiral vectors $X_\pm=P_\pm(X)$ satisfies
\begin{align}
\eta(\bl X_\mp,Y_\pm\br,Z_\pm) &= - \eta(\bl Y_\pm,X_\mp\br,Z_\pm) \\
\eta(\bl X_\pm,Y_\pm\br,Z_\mp) &= \eta(\bl Z_\mp,X_\pm\br,Y_\pm) .
\end{align}
\end{Lem}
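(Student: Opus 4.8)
\textbf{Proof proposal for Lemma \ref{lem:LemmaForTorsion}.}

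The plan is to extract both identities from the three defining axioms of the canonical D-bracket, namely the metric compatibility \eqref{eq:DbracketComp} and the symmetry condition \eqref{eq:DbracketNorm}, combined with the orthogonality of the chiral eigenbundles \eqref{eq:Cpm-orthogonal-property} and the generalized integrability of $L$ and $\Lt$ encoded in \eqref{Generalisedint}. The overall strategy is to first establish the analogous statements in the $K$-polarization (i.e. for the eigenbundles $L,\Lt$), where the generalized integrability conditions $\bl L,L\br\subset L$ and $\bl\Lt,\Lt\br\subset\Lt$ give the cleanest handle, and then transport them to the chiral polarization $C_\pm$ via Lemma \ref{lem:Kmap-Cpm} together with the relation $\{J,K\}=0$.

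For the first identity, I would start from the symmetry axiom in its polarized form $\eta(\bl A,B\br,B)=\eta(A,\bl B,B\br)$, or more usefully its polarized bilinear version obtained by expanding \eqref{eq:DbracketNorm}: $\eta(\bl A,B\br,C)+\eta(\bl A,C\br,B)=\eta(A,\bl B,C\br)+\eta(A,\bl C,B\br)$. The key observation is that when the arguments lie in complementary chiral subspaces, the right-hand side contractions are controlled by the orthogonality \eqref{eq:Cpm-orthogonal-property}: taking $A=X_\mp$, $B=Y_\pm$, $C=Z_\pm$, the terms $\eta(A,\bl B,C\br)$ and $\eta(A,\bl C,B\br)$ pair a $C_\mp$ vector against brackets of two $C_\pm$ vectors. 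I expect that the chiral integrability (the analogue of \eqref{Generalisedint}, which should hold since the Born connection preserves $C_\pm$) forces these brackets $\bl Y_\pm,Z_\pm\br$ to remain in $C_\pm$, so that their $\eta$-pairing with the complementary $X_\mp$ vanishes. This would collapse the identity to $\eta(\bl X_\mp,Y_\pm\br,Z_\pm)=-\eta(\bl X_\mp,Z_\pm\br,Y_\pm)$; combining this skewness with metric compatibility \eqref{eq:DbracketComp} applied to $X_\mp[\eta(Y_\pm,Z_\pm)]$ should then yield the stated antisymmetry under swapping the bracket arguments.

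For the second identity $\eta(\bl X_\pm,Y_\pm\br,Z_\mp)=\eta(\bl Z_\mp,X_\pm\br,Y_\pm)$, the approach is to again use \eqref{eq:DbracketComp} to write $Z_\mp[\eta(X_\pm,Y_\pm)]$ two ways and to exploit that the symmetric part of the bracket is governed by \eqref{eq:DbracketNorm1}, $\bl W,W\br=\tfrac12\mathcal D[\eta(W,W)]$. Polarizing this and contracting against vectors in the complementary chiral subspace, the $\mathcal D$-terms should drop out by orthogonality, leaving a relation that shuffles one argument across the bracket. The main obstacle I anticipate is making the chiral integrability statement rigorous: the conditions \eqref{Generalisedint} are stated for the $K$-eigenbundles, not directly for $C_\pm$, so I would need to either verify that $\bl C_\pm,C_\pm\br\subset C_\pm$ independently (which may require the full Born structure, not just the para-Hermitian data) or carefully translate via $K$ using Lemma \ref{lem:Kmap-Cpm}. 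If the clean integrability of $C_\pm$ is not available, the fallback is a direct computation expanding $X_\pm=\tfrac12(X\pm JX)$ in \eqref{eq:BornExpanded2}-style manipulations and tracking the $\eta$-pairings term by term, which is routine but tedious.
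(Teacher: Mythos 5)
Your overall direction---deriving the lemma purely from the metric-algebroid axioms \eqref{eq:DbracketComp}, \eqref{eq:DbracketNorm} plus the chiral orthogonality \eqref{eq:Cpm-orthogonal-property}---is in fact viable, but the execution as sketched has a genuine gap: the ``chiral integrability'' $\bl C_\pm,C_\pm\br\subset C_\pm$ on which your first-identity argument leans is false, and nothing in the paper supplies it. The generalized integrability \eqref{Generalisedint} follows from $\NN_K=0$, which is part of the definition of a D-bracket, and it concerns only the $K$-eigenbundles $L,\Lt$; there is no hypothesis $\NN_J=0$, and the canonical D-bracket does not preserve the chiral subbundles. This is exactly why the Born connection \eqref{eq:BornBracket} carries explicit projections $(\cdot)_\pm$, and why the uniqueness proof finds $\tilde{\n}_{X_+}Y_-=\bl X_+,Y_-\br_-$ rather than $\bl X_+,Y_-\br$; your heuristic ``the Born connection preserves $C_\pm$, hence so does the bracket'' fails because the torsion-free relation between $\nB$ and $\bracd$ contains the extra transpose term $\eta(\nB_ZX,Y)$. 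Indeed, the second identity you are proving exhibits precisely the generically nonvanishing component $\eta(\bl X_\pm,Y_\pm\br,Z_\mp)=\eta(\bl Z_\mp,X_\pm\br,Y_\pm)$; were $\bl C_\pm,C_\pm\br\subset C_\pm$ true, that identity would trivialize. Moreover, in your chosen configuration the polarized norm axiom actually gives $\eta(X_\mp,\bl Y_\pm,Z_\pm\br+\bl Z_\pm,Y_\pm\br)=\eta(X_\mp,\mathcal{D}[\eta(Y_\pm,Z_\pm)])=X_\mp[\eta(Y_\pm,Z_\pm)]$, which is generically nonzero since the function $\eta(Y_\pm,Z_\pm)$ (unlike $\eta(X_\mp,Y_\pm)$) does not vanish; had your vanishing claim held, combining it with \eqref{eq:DbracketComp} would force $X_\mp[\eta(Y_\pm,Z_\pm)]=0$, an absurdity. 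So the ``collapse'' you describe never happens.

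The repair within your own strategy is to polarize \eqref{eq:DbracketNorm1} in the \emph{mixed} pair rather than the same-chirality pair: $\bl X_\mp,Y_\pm\br+\bl Y_\pm,X_\mp\br=\mathcal{D}[\eta(X_\mp,Y_\pm)]=0$ because $\eta(X_\mp,Y_\pm)$ vanishes identically by \eqref{eq:Cpm-orthogonal-property}, which yields the first identity at once (even at the level of vector fields). The second identity then follows by contracting $\bl X_\pm,Y_\pm\br+\bl Y_\pm,X_\pm\br=\mathcal{D}[\eta(X_\pm,Y_\pm)]$ with $Z_\mp$, using \eqref{eq:DbracketComp} applied to $Z_\mp[\eta(X_\pm,Y_\pm)]$ and to $0=Y_\pm[\eta(Z_\mp,X_\pm)]$, together with the first identity---no integrability statement of any kind enters, and \eqref{Generalisedint} plays no role, so your worry about translating it from $K$ to $J$ polarizations is moot. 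For comparison, the paper's own proof is essentially your ``fallback'': it substitutes the explicit canonical form \eqref{eq:D-bracket}, $\eta(\bl X,Y\br,Z)=\eta(\nc_XY-\nc_YX,Z)+\eta(\nc_ZX,Y)$, after which both identities reduce in a few lines of adding and subtracting connection terms to $\nc\eta=0$ plus orthogonality; that route is shorter, while the repaired axiomatic argument above has the mild advantage of holding for any metric-compatible bracket.
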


Recalling the definition of generalized torsion \eqref{eq:gen-torsion} as the difference between the brackets $\bracd^\n$ and $\bracd$, the generalized torsion for the Born connection is given by
\begin{equation}
\Th(X,Y,Z) = \eta(\nB_XY-\nB_YX,Z)+\eta(\nB_ZX,Y)-\eta(\bl X,Y\br,Z) .
\label{eq:torsioncalc1}
\end{equation}
We use the expanded form  of the Born connection  \eqref{eq:BornExpanded1} to rewrite the second and third term in the generalized torsion. Making use of  Lemmas \ref{lem:Kmap-Cpm} and \ref{lem:LemmaForTorsion} we have
\begin{align}
-\eta(\nB_YX,Z) &= -\eta(\bl Y_-,X_+ \br_,Z_+)  - \eta( \bl Y_+,X_-\br,Z_-) \notag\\
	&\qquad  + \eta( \bl Y_+,KX_+\br, KZ_+) + \eta(\bl Y_-,KX_-\br) ,KZ_-) \notag\\
	&= \eta(\bl X_+,Y_- \br_,Z_+)  + \eta( \bl X_-,Y_+\br,Z_-) \notag\\
	&\qquad  - \eta( \bl KX_+,Y_+\br, KZ_+) - \eta(\bl KX_-,Y_-\br) ,KZ_-) \label{eq:torsioncalc3}\\
\eta(\nB_ZX,Y) &= \eta(\bl Z_-,X_+ \br_,Y_+)  + \eta( \bl Z_+,X_-\br,Y_-) \notag\\
	&\qquad  - \eta( \bl Z_+,KX_+\br, KY_+) - \eta(\bl Z_-,KX_-\br) ,KY_-) \notag\\
&= \eta(\bl X_+,Y_+ \br_,Z_-)  + \eta( \bl X_-,Y_-\br,Z_+) \notag\\
	&\qquad  - \eta( \bl KX_+,KY_+\br, Z_+) - \eta(\bl KX_-,KY_-\br)  Z_-,). \label{eq:torsioncalc4}
\end{align}
We now add all the terms in \eqref{eq:torsioncalc1} and after some rearranging find
\begin{equation}
\begin{aligned}
\Th(X,Y,Z)=-\sum_\pm\Big(&\eta(\bl X_\pm,Y_\pm \br,Z_\pm)+\eta(\bl X_\pm,KY_\pm\br ,KZ_\pm)\\
+&\eta(\bl KX_\pm,Y_\pm\br,KZ_\pm)+\eta(\bl KX_\pm,KY_\pm\br ,Z_\pm)\Big)
\end{aligned}
\end{equation}
which in turn can be written in terms of the  generalized Nijenhuis tensor \eqref{eq:gen-nij-def} of $K$
\begin{align}\label{eq:gen-torsion-nijenhuis}
\Th(X,Y,Z)=
-\sum_\pm&\eta(\bl X_\pm,Y_\pm \br+\bl KX_\pm,KY_\pm\br  
	- K\bl KX_\pm,Y_\pm\br - K\bl X_\pm,KY_\pm\br ,Z_\pm) \notag\\
	=-\sum_\pm& \eta({\cal N}_K(X_\pm,Y_\pm),Z_\pm).
\end{align}
Since $\bracd$ is a D-bracket (Definition \ref{def:Dbracket}) we have $\NN_K=0$ and this vanishes. Therefore, we find that the generalized torsion of the Born connection is indeed zero
\begin{equation}
\Th(X,Y,Z) = 0 \, .
\end{equation}

\subsubsection{Proof of Uniqueness}
In this section we use the properties of the Born connection to show that it has to take the form \eqref{eq:BornBracket}. 
We start by assuming that $\tilde{\n}$ is any connection that satisfies the condition of our theorem: it preserves the Born structure, in particular it satisfies $\tilde{\n} J= \tilde{\n}K=0$, and has vanishing torsion.
It is convenient to evaluate independently  each of the four individual chiral components
\begin{equation}
\eta(\tilde{\n}_XY,Z) = \sum_\pm\ \big[ \eta(\tilde{\n}_{X_\pm} Y_\mp,Z) + \eta(\tilde{\n}_{X_\pm} Y_\pm,Z) \big] .
\end{equation}
We start with the evaluation of $\tilde{\n}_{X_+}Y_-$
\begin{align}
\eta(\tilde{\n}_{X_+}Y_-,Z)&\overset{\tilde{\n} J=0}{=}\eta(\nB_{X_+}Y_-,Z_-)\overset{\tilde\Th=0}{=}\eta(\bl X_+,Y_-\br,Z_-)+\eta(\nB_{Y_-}X_+,Z_-)-\eta(\nB_{Z_-}X_+,Y_-) \notag\\
&\overset{\tilde{\n} J=0}{=}\eta(\bl X_+,Y_-\br,Z_-)=
\eta(\bl X_+,Y_-\br_-,Z),
\end{align}
where we used $\eta(X_\pm,Y_\mp)=0$. Therefore, we conclude that the mixed components take the form $\tilde{\n}_{X_\pm}Y_\mp=\bl X_\pm,Y_\mp\br_\mp$. For the remaining components we calculate
\begin{equation}
\begin{aligned}
\eta(\tilde{\n}_{X_+}Y_+,Z)&\overset{\tilde{\n} J=0}{=}\eta(\tilde{\n}_{X_+}Y_+,Z_+)\overset{\tilde{\n} K=0}{=}-\eta(\tilde\n_{X_+}KY_+,KZ_+)\\
&\overset{\tilde\Th=0}{=}-\eta(\bl X_+,KY_+\br,KZ_+)-\eta(\tilde\n_{KY_+}X_+,KZ_+)+\eta(\tilde\n_{KZ_+}X_+,KY_+)\\
&\overset{\tilde{\n} J=0}{=}-\eta(\bl X_+,KY_+\br,KZ_+)=\eta(K(\bl X_+,KY_+\br)_+,Z),
\end{aligned}
\end{equation}
where we used Lemma \ref{lem:Kmap-Cpm} and that $\eta(X_\pm,KY_\pm)=0$. Therefore we have for the other components $\nB_{X_\pm}Y_\pm = (K\bl X_\pm,KY_\pm\br)_\pm$. Putting everything together, this shows that $\tilde{\nabla}$ is uniquely determined by the listed conditions and therefore equal to  $\nB$.

\section{Relationship to Generalized Geometry and DFT}

In this final section we will look at the Born connection in the context of Generalized Geometry and Double Field Theory (DFT). We now have the tools to give precise relations between the various objects in each framework. This will be spelled out in more detail in the future work \cite{Freidel:new}.

\subsection{From D-bracket to Courant algebroids }
Para-Hermitian geometry is closely related to generalized geometry \cite{Hitchin:2004ut,Gualtieri:2003dx}, i.e. the geometry of the generalized tangent bundle $\mathbb{T}M\coloneqq TM\oplus T^*M$. This has been observed in \cite{Vaisman:2012px,Vaisman:2012ke} and explored in more detail in \cite{Freidel:2017yuv,Svoboda:2018rci}. We will briefly recall this relationship here and relate the notions of a Courant bracket and generalized torsion defined on $\Tt M$ to our geometrical set-up of a tangent bundle $T\PS$ equipped with the D-bracket.

Generalized geometry is based on the concept of a transitive Courant algebroid \cite{Xu} which is encoded into the quadruple $(E, \la\,,\,\ra_E, [\,,\,]_E,\pi)$ where $E\to M$ is a vector bundle over $M$ equipped with an invertible fiber metric $\la\,,\,\ra_E$ and a surjective anchor $\pi: E\to TM$. The basic axioms are that  $(E, \la\,,\,\ra_E, [\,,\,]_E,\pi)$ is a metric algebroid which includes metric compatibility, the Leibniz property and normalisation
\bea
\pi(u)\la v, w\ra_E &=& \la [u,v]_E, w\ra_E+\la v,[u,w]_E\ra_E, \label{eq:DbracketCompalg} \\
{[}v, f u]_E &=& f[v,  u]_E+\pi(v)[f]u, \\ 
 \la {[}u,v]_E, v\ra_E&=&  \la u,{[}v,v]_E \ra_E\label{eq:DbracketNormalg}
\eea
where $u,v,w\in\Gamma(E)$ (cf. Definition \ref{def:MetricAlgebroid} for a metric algebroid on $T\PS$ with identity anchor). We can now define the algebroid differential $\cD_\pi$ by $\la \mathcal{D}_\pi f, V\ra_E= \pi(V)[f]$.
In order to obtain a Courant algebroid one demands in addition that the metric compatible bracket satisfies the Jacobi identity
\begin{equation}
[ u,[ v,w]_E]_E = [[ u,v]_E,w]_E + [v,[ u,w]_E]_E,
\end{equation}
which means that each fiber of $E$ forms a Leibniz algebra as introduced by Loday \cite{Loday,Lodayreview}. The Leibniz identity implies $\pi ([ u,v]_E) = [\pi(u),\pi(v)]$, stating that $\pi$ is a morphism intertwining the Lie bracket with the metric algebroid bracket. An important subclass of transitive Courant algebroids are the ones which are exact\footnote{It is often assumed in addition that the exact sequence $0\to \mathrm{Ker}\,\pi\to E \to TM\to0$ is split. We would rather call these split algebroids rather than exact.}, i.e. the ones for which  $\mathrm{Ker}\,\pi =(\mathrm{Ker}\,\pi)^\perp$. 

The canonical  example of such a structure on any manifold $M$ is the \textit{standard Courant algebroid} $(\Tt M,\la\ ,\ \ra,\brac,\pi)$. Here $\Tt M=TM\oplus T^*M$ is the doubled bundle, its anchoring map is given by the canonical projection $\pi:\Tt M\rightarrow TM$, the Dorfman bracket $\brac$ reads 
\be\label{Standard}
[x+\ap,y+\bt]=[x,y]+L_x\bt-\imath_y \rd\ap,
\ee
where $[x,y]$ denotes the Lie bracket and $L_x$ the Lie derivative, and the canonical pairing is
\be\label{eq:pairingTT}
\la x+\ap,y+\bt\ra=\imath_x\bt+\imath_y\ap,
\ee
for $x, y \in TM $ and $\ap,\bt \in T^* M$. The standard bracket can be twisted by a closed three-form $H$ called the H-flux to define a non-standard bracket (or twisted bracket) $\brac_H$ given by 
\be
[ x+\ap,y+\bt]_H = [ x+\ap,y+\bt] + \imath_x\imath_y H.
\ee 
The Jacobi identity for the non-standard bracket is equivalent to the condition $\rd H=0$ \cite{Severa:2001qm}. 

In the case where the para-Hermitian structure is $L$-integrable there is a direct correspondence between the para-Hermitian geometry with D-structures and generalized geometry (or exact Courant algebroids):
\begin{Thm}\label{th:Dstructure-Calgebroid}
Any  D-structure which is based on an $L$-integrable para-Hermitian manifold gives rise to an exact Courant algebroid whose anchor map is the projection $P$. Moreover, the Courant algebroid obtained from the canonical D-structure is canonically isomorphic to the standard Courant algebroid.   
\end{Thm}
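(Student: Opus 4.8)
The plan is to realise the Courant algebroid on the bundle $E = T\PS$ along the leaves of $L$, keeping $\eta$ as the fiber metric and the canonical D-bracket $\bracd$ as the bracket, but replacing the identity anchor of the metric algebroid by the projection $\pi = P$. First I would use $L$-integrability: by definition $L$ is then an integrable distribution, so $\PS$ is foliated by leaves tangent to $L$, and on each leaf $M$ one has $L|_M = TM$. Using the identification of $\Lt$ with $L^* = T^*M$ supplied by $\eta$ (Remark \ref{rem:canonicalform}), I obtain $E \cong TM\oplus T^*M = \Tt M$, with $\pi = P$ the canonical projection onto $TM$. Exactness is then immediate: $\ker\pi = \Lt$, and since the eigenbundle $\Lt$ is maximally $\eta$-isotropic of rank $d$ inside the split metric $\eta$ one has $\Lt = \Lt^\perp$, i.e. $\ker\pi = (\ker\pi)^\perp$.

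Next I would transfer the three metric-algebroid axioms (\ref{eq:DbracketComp})--(\ref{eq:DbracketNorm}) to the Courant axioms (\ref{eq:DbracketCompalg})--(\ref{eq:DbracketNormalg}) with anchor $P$. The only nontrivial point is that $P$, rather than the identity, must appear in the derivative terms: since a section of $\Tt M$ is pulled back from the leaf $M$ and hence constant along $\Lt$, every transverse derivative $\Pt(u)[f]$ vanishes on such data, so that $u[\eta(v,w)] = P(u)[\eta(v,w)]$ and $u[f] = P(u)[f]$ on pulled-back sections and functions. One checks along the way that $\bracd$ closes on such sections, using $\bl L,L\br\subset L$. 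Then (\ref{eq:DbracketComp}) becomes (\ref{eq:DbracketCompalg}), (\ref{eq:DbracketLeibniz1}) becomes the Leibniz axiom, and (\ref{eq:DbracketNorm}) becomes the normalisation (\ref{eq:DbracketNormalg}); the algebroid differential $\cD_\pi$ reduces to $\eta^{-1}\rd$ on $M$.

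The main obstacle is the Jacobi identity, which a metric-algebroid bracket does not satisfy in general and which is precisely what upgrades the structure to a genuine Courant algebroid. Here $L$-integrability is used decisively: the induced bracket on $\Tt M$ is of twisted-Dorfman type, differing from the standard bracket (\ref{Standard}) by a contribution encoded in a fixed three-form $H$ determined by the chosen D-structure, and by the standard computation the Jacobiator is governed by $\rd H$. I would show that the vanishing of the projected Nijenhuis tensor $N_K^P = \Pt[PX,PY]$ (which is exactly $L$-integrability), together with $\bl L,L\br\subset L$ and metric compatibility, forces $\rd H = 0$, so that the bracket satisfies Jacobi. This is the step where the hypothesis genuinely enters and where the most care is required.

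Finally, for the canonical D-structure I would pin down the isomorphism with the standard Courant algebroid explicitly. The canonical condition (\ref{eq:comp1}) together with $L$-integrability gives $\bl PX,PY\br = P[PX,PY] = [PX,PY]$, so that the $TM$--$TM$ component of the bracket is the Lie bracket, matching the first term of (\ref{Standard}). Metric compatibility (\ref{eq:DbracketComp}), the normalisation, and the anchor-morphism property (now available from the Jacobi identity) then determine the mixed and dual components to be $\bl x,\alpha\br = L_x\alpha$, $\bl\alpha,x\br = -\imath_x\rd\alpha$ and $\bl\alpha,\beta\br = 0$, forcing $H=0$. Under the identification $x + \alpha \mapsto x + \alpha$ the pairing becomes (\ref{eq:pairingTT}) and the bracket becomes (\ref{Standard}), exhibiting the canonical isomorphism with the standard Courant algebroid.
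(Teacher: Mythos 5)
Your frame (identify $\Lt\cong L^*$ via $\eta$, take $P$ as anchor, get exactness from $\mathrm{Ker}\,P=\Lt=\Lt^{\perp}$) agrees with the paper, but there is a genuine gap at the center of your argument: you keep the \emph{full} D-bracket and try to make $P$ its anchor by restricting to ``pulled-back'' sections and functions, and this step does not work. There is no canonical projection of $\PS$ onto a leaf, so ``pulled back from $M$'' can only mean data annihilated by transverse derivatives; but $\Lt$ is \emph{not} assumed integrable, and when it is non-integrable (e.g.\ bracket-generating) the only functions with $\Pt(u)[f]=0$ for all $u$ are locally constant, so your restricted class is essentially empty and in particular cannot witness the Leibniz axiom, which must hold for \emph{all} $f\in C^\infty$. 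Moreover the transverse part of the full D-bracket does not vanish on any natural class of sections: the canonical D-bracket satisfies $\bl \Pt X,\Pt Y\br=\Pt([\Pt X,\Pt Y])$ by \eqref{eq:comp2}, which is generically nonzero and built from $\Lt$-derivatives, directly contradicting your claimed component $\bl\alpha,\beta\br=0$; and $\bl X,fY\br=f\bl X,Y\br+X[f]Y$ carries the full derivative $X[f]$, not $P(X)[f]$. The missing idea is the paper's splitting $\bl X,Y\br=\bl X,Y\br_L+\bl X,Y\br_{\Lt}$, where the $L$-bracket, $\eta(\bl X,Y\br_L,Z)=\eta(\nc_{PX}Y-\nc_{PY}X,Z)+\eta(\nc_{PZ}X,Y)$, contains only derivatives along $L$ by construction; it is this bracket, not $\bracd$ itself, that is anchored by $P$ on all sections and functions and carries the Courant algebroid structure.

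Your treatment of Jacobi is also off target. You propose to exhibit the induced bracket as an $H$-twisted Dorfman bracket and deduce $\rd H=0$ from $N_K^P=0$, but the paper's Lemma \ref{Lbrac} computes the $L$-bracket in closed form, $\bl x+\xt,y+\yt\br_L=[x,y]+\Lie_x\yt-\Lie_y\xt+\Db\,\eta(x,\yt)-N(x,y)$ with $N=\Pt N_K$; $L$-integrability kills $N$, and what remains is \emph{exactly} the untwisted Dorfman bracket under the isomorphism $\rho$, so there is no three-form whose closure needs proving, and the Jacobi identity is then verified directly, component by component, in the appendix. Relatedly, your final step is underdetermined: the Courant axioms together with \eqref{eq:comp1} do not force the mixed and dual components to be $L_x\alpha$, $-\imath_x\rd\alpha$ and $0$, since twisting by any closed three-form preserves all the axioms (exact Courant algebroids are classified by the \v{S}evera class), so ``forcing $H=0$'' requires the explicit computation of the bracket from the canonical connection --- which is precisely what Lemma \ref{Lbrac} supplies and what your proposal lacks.
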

A version of this central result was first established in \cite{Freidel:2017yuv} and the relationship between $L$-integrable structures and Courant alegbroid  was further studied in \cite{Svoboda:2018rci}. We will devote a subsequent paper to a deeper study of this correspondence using the framework of D-structures briefly introduced in Section \ref{sec:Dstructure}. This result is key to the relationship between para-Hermitian geometry and generalized geometry which has been first studied by Vaisman in \cite{Vaisman:2012px,Vaisman:2012ke}.

We now describe this relationship in the restricted context of the canonical D-bracket. One first assumes that  $L$ is integrable. In this case $\PS$ can be viewed as a foliated manifold $\ell$ where $\ell$ denotes the union of all leafs -- called the integral foliations of $L$ -- and as such is a half-dimensional manifold (compared to $\PS$). By definition, $L$ is then the tangent bundle of this manifold, $L=T\ell$. Then $T \PS$ can then be viewed as a bundle over $\ell$ with anchoring map simply given by the para-Hermitian projection $P: T\PS \to T\ell$. 

To go further, one first notices that the canonical D-bracket can be expressed as the sum 
\be\label{Lsplit}
\bl X, Y\br= \bl X, Y\br_L +\bl X,Y\br_{\Lt}
\ee
where the $L$-bracket $\bl X, Y\br_L$ contains only derivatives along $L$ and is given by
\begin{equation}
\eta(\bl X,Y\br_L,Z)=\eta(\nc_{PX}Y-\nc_{PY}X,Z)+\eta(\nc_{PZ}X,Y).
\end{equation}
The $L$-bracket is characterised by three defining properties
\be\label{defining}
 \bl \Pt X,\Pt Y \br_L =0,\qquad P(\bl P X ,\Pt Y \br_L)=0, \qquad
 \bl X ,f Y \br_L = f  \bl X ,Y \br_L +  P X[f] Y .
\ee
The first one states that the $L$-bracket vanishes on $\Lt$, while the second means that the action of $L$ on $\Lt$ stays in $\Lt$, the third one implies that $P$ is the anchor for the $L$-bracket. To make the notation less cluttered, we introduce
\begin{align}
x:= PX \in L , \quad \xt:=\Pt X\in \tilde{L}.
\end{align}
The metric $\eta$ can be viewed as a map $\eta: T\PS\to T^*\PS$, allowing us to identify ${\tilde{L}}$ with $L^*= T^*\ell$ and to define the following isomorphism 
\begin{equation}\label{iso}
\begin{aligned}
\rho:    T\PS&\rightarrow \Tt \ell \\
X= x+\xt  &\mapsto x+\alpha \quad\mathrm{with}\quad \alpha =\eta (\xt).
\end{aligned}
\end{equation}
Given this isomorphism, we can define the notion of a Lie derivative along $L$ acting on $\Lt$ by
\begin{align}
\Lie_x \yt \coloneqq \rho^{-1}L_x\rho(\yt),
\end{align}
where $L_x$ is the usual Lie derivative of a one-form in $L^*$ along $L$. Equivalently, this can be written as
\be
\eta (\mathcal{L}_x \yt, z) = x [\eta(\yt, z)] - \eta(\yt, [x,z]).
\ee

\begin{Lem}\label{Lbrac}
The $L$-bracket is explicitly given by 
\be\label{bracketexpl}
\bl x+\tilde{x} ,y+\tilde{y} \br_L =
[x,y] + \mathcal{L}_x \yt - \mathcal{L}_y \xt + \Db \eta(x,\yt) - N(x,y),
\ee where $[\,,\,]$ is the Lie bracket on $T \PS$, $\Db$ is the projected differential, $\Db f \in \Lt$ is given by $\eta(Z, \Db f) = z[f]$ and $N=\Pt N_K$ is the projected Nijenhuis tensor. 
\end{Lem}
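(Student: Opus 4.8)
The plan is to evaluate the defining formula $\eta(\bl X,Y\br_L,Z)=\eta(\nc_{PX}Y-\nc_{PY}X,Z)+\eta(\nc_{PZ}X,Y)$ on the decomposition $X=x+\xt$, $Y=y+\yt$ and exploit bilinearity to reduce the bracket to four pieces $\bl x,y\br_L$, $\bl x,\yt\br_L$, $\bl\xt,y\br_L$ and $\bl\xt,\yt\br_L$. The last piece vanishes immediately by the first defining property in \eqref{defining}, so only three remain. Throughout I will lean on four structural facts: that $\nc$ preserves the splitting $T\PS=L\oplus\Lt$ (so $\nc_A B$ stays in the eigenbundle containing $B$), that $L$ and $\Lt$ are $\eta$-isotropic, that $\nc$ is metric compatible, and the polarisation of the canonical torsion from Lemma \ref{polarised}.

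For the purely-$L$ piece, since $\nc$ preserves $L$ one has $\nc_x y-\nc_y x=P(\lc_x y-\lc_y x)=P[x,y]$ by torsionlessness of $\lc$, while the term $\eta(\nc_z x,y)$ vanishes because $\nc_z x$ and $y$ both lie in the isotropic $L$. Pairing against $Z\in L$ and against $Z\in\Lt$ separately then shows $\bl x,y\br_L=P[x,y]$. Using $N=\Pt N_K$ together with $N_K(x,y)=\Pt[x,y]$ for $x,y\in L$ (the relevant component of the projected Nijenhuis tensor), this is exactly $[x,y]-N(x,y)$, which accounts for the first and last terms of the claimed formula.

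For the mixed pieces I would pair against an arbitrary $z\in L$, the $\Lt$-pairings being pinned down by the defining property that $\bl x,\yt\br_L$ lies in $\Lt$ and, for $\bl\xt,y\br_L$, by isotropy. Rewriting the $\nc$-terms via metric compatibility and trading $\nc_z x-\nc_x z$ for $[z,x]+T^c(z,x)$, the symmetric $\nc$-contributions cancel and the torsion term drops out because $T^c(z,x)\in\Lt$ pairs trivially with $\yt$ (again Lemma \ref{polarised} and isotropy). What survives reassembles into the Lie-derivative combination $\eta(\Lie_x\yt,z)=x[\eta(\yt,z)]-\eta(\yt,[x,z])$, giving $\bl x,\yt\br_L=\Lie_x\yt$ and, by the same computation with the roles exchanged, $\bl\xt,y\br_L=-\Lie_y\xt+\Db\,\eta(\xt,y)$, where the extra projected-differential term is forced by the non-skewness of the bracket.

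The step I expect to be the main obstacle is the correct bookkeeping of this last projected-differential term: it issues not from any single $\nc$-derivative but from the symmetric part of the $L$-bracket. I would therefore pin it down by first establishing the identity $\bl X,Y\br_L+\bl Y,X\br_L=\Db\,\eta(X,Y)$, an immediate consequence of the metric compatibility of $\nc$ mirroring \eqref{eq:DbracketNorm}, and then using it to fix the symmetric contribution of the mixed terms unambiguously. As a global consistency check I would verify, via the isomorphism $\rho$ of \eqref{iso} sending $\xt$ to the one-form $\eta(\xt)$, that the resulting expression reproduces the standard Dorfman bracket \eqref{Standard} after the Cartan rewrite $\Lie_x\yt=\imath_x\rd(\eta\yt)+\rd\imath_x(\eta\yt)$; matching against the $L$-integrable case of Theorem \ref{th:Dstructure-Calgebroid} both validates the coefficients and settles which pairing enters the $\Db$ term.
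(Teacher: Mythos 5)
Your proof is correct and is precisely the argument the paper intends: the paper's own ``proof'' of Lemma \ref{Lbrac} is a one-sentence remark that the computation is mechanical, resting on the definition of $\Lie_x$ and the polarisation of the canonical torsion (Lemma \ref{polarised}) --- exactly the ingredients your four-piece evaluation deploys, and each of your individual steps (the vanishing of $\bl\xt,\yt\br_L$, $\bl x,y\br_L=P[x,y]=[x,y]-N(x,y)$, the cancellation of $T^c$ against $\eta$-isotropy in the mixed pieces) checks out.

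One substantive remark. Your bookkeeping yields $\bl \xt,y\br_L=-\Lie_y\xt+\Db\,\eta(\xt,y)$, so the assembled bracket carries $\Db\,\eta(\xt,y)$ rather than the $\Db\,\eta(x,\yt)$ printed in \eqref{bracketexpl}. Your slot is the correct one and the printed one is a typo: under the isomorphism $\rho$ of \eqref{iso}, the Dorfman term in \eqref{Standard} is $-\imath_y\rd\alpha=-L_y\alpha+\rd\big(\alpha(y)\big)$ with $\alpha=\eta(\xt)$, i.e.\ $\rd\,\eta(\xt,y)$; the ideal property \eqref{ideal} fails for the printed slot (take $x=0$, $\xt=\Db f$: the printed formula gives $\bl\Db f,Y\br_L=-\Lie_y\Db f\neq 0$, while yours gives $-\Lie_y\Db f+\Db(y[f])=0$); and the appendix's own Jacobi computation, $\eta(\bl\Db f,y+\yt\br_L,z)=-\eta(L_y\Db f,z)+z[y[f]]$, contains exactly your $\Db\,\eta(\xt,y)$ contribution. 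So your blind derivation in fact corrects the statement, and your proposed Dorfman cross-check is the right way to certify the slot. One small caution on your strategy: the symmetric identity $\bl X,Y\br_L+\bl Y,X\br_L=\Db\,\eta(X,Y)$ is true but cannot by itself discriminate between the two slots, since the two candidate formulas exchange under $X\leftrightarrow Y$ and both sum to $\Db\,\eta(X,Y)$ (using $\eta(X,Y)=\eta(x,\yt)+\eta(\xt,y)$); it pins the term down only in combination with your direct computation $\bl x,\yt\br_L=\Lie_x\yt$, which is indeed the order in which you use it.
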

The proof of this important lemma is mechanical, the main aspects are  already spelled out in \cite{Freidel:2017yuv}, it simply uses the definition of the Lie derivative and the fact that the torsion of the canonical connection is polarized and given by the projected Nijenhuis tensor (see Lemma \ref{polarised}). It is  important to note that this expression is valid even if $L$ is non-integrable. The bracket $[x,y]$ used in the expression is the Lie Bracket on $T \PS$ and the combination  $[x,y]- N(x,y)$ is the projection of $[x,y]$ along $L$. 

When $L$ is integrable, $N$ vanishes, $[x,y]$ is in $L$, and we can prove the first claim of the theorem: the $L$-bracket satisfies the Leibniz algebroid axioms and the projection map $P:X\to x $ of $T \PS$ onto $L$ satisfies the anchoring property.
It is also interesting to note that the set $\mathbb{Z}$ of elements  of the form $\Db f$ form a ideal for the $L$-bracket
\be\label{ideal}
\bl \Db f , Y \br_L =0,\qquad 
\bl X , \Db f  \br_L = \Db x[f].
\ee
We see more precisely that $\mathbb{Z}$ is in the left center of the bracket, $\bl \mathbb{Z},T\PS \br_L=0$  while it is a right ideal $\bl T\PS, \mathbb{Z} \br_L\subset \mathbb{Z}$.

We can finally show that it satisfies the Jacobi identity. The Jacobiator which measures the failure to satisfy Jacobi is given by $J(X,Y,Z):= \bl \bl X,Y\br_L, Z\br_L-\bl X, \bl Y, Z\br_L\br_L + \bl Y, \bl X, Z\br_L\br_L$ and we can look at its different components. The detailed proof is in the appendix.
%
 This proves the first claim of the theorem: the $L$-bracket satisfy the Leibniz algebroid axioms. 
 
The second part of the theorem claims that the Courant algebroid associated with the $L$-bracket is isomorphic to the standard Courant algebroid \cite{Freidel:2017yuv,Svoboda:2018rci}. This now follows easily from Lemma \ref{Lbrac} and the isomorphism \eqref{iso} which identifies ${\tilde{L}}$ with $L^*= T^*\ell$. This isomorphism defines a Courant algebroid isomorphism between $(T\PS,\eta,\bracd_L,P)$ and the {standard Courant algebroid} $(\Tt \ell,\la\ ,\ \ra,\brac_\ell,\pi)$. It maps $P$ onto $\pi$ by construction  and it intertwines the brackets 
\begin{equation}\label{eq:D-brac-dorfman}
\rho \bl X,Y\br_L =  [\rho X,\rho Y]_\ell. 
\end{equation}
 
Let us finally note  that the definition of the  D-bracket  does not require the underlying para-Hermitian structure $(K,\eta)$ to be integrable and the integrability is only needed to make contact with generalized geometry. If the the para-Hermitian structure $(\PS,\eta,K)$ is not integrable, we expect that we can still make contact with generalized geometry by allowing non-trivial fluxes \cite{Freidel:new}.

\subsection{From Courant algebroids to generalized connections}
One of the main interests in generalized geometry from the mathematical side stems from the fact that we can define connections, torsion, Riemannian-like  and Ricci-like notions of curvature and ultimately a generalization of the Einstein-Hilbert action. 

From the physics side the interest is mainly that it is then possible to rewrite the one-loop beta function equation of a wealth of string theories simply as the vanishing of the generalized Ricci tensor \cite{StreetsJ}. In other words, for many string models the Zamolodchikov action governing the string renormalization is a generalized form of the Einstein equations.

We now briefly review these notions before exploring the connection they entertain with the para-Hermitian geometry. We refer the reader to  \cite{Jurco:2016emw} for an excellent review of these geometrical aspects.
Let's start by describing the generalized geometry associated with the Courant algebroid given by $(E, \la\,,\,\ra_E, [\,,\,]_E,\pi)$. A Courant algebroid connection $D_X$ associated with an anchor $\pi$ is a derivation of $\Gamma(E)$ that satisfies the Leibnitz rule
\be
D_{fX} Y = f D_XY,\qquad D_X fY = f D_X Y + \pi(X)[f] Y,
\label{eq:gen-con-prop}
\ee
and preserves the pairing 
\be
\pi(X)\la X,Y\ra_E = \la D_XY,Z\ra_E +\la Y,D_X,Z\ra_E.
\ee
Given such a connection we can define its generalized torsion $\Th_D(X,Y)$ with coefficients $ \Th_D(X,Y,Z) = \la \Th_D(X,Y), Z\ra_E$ to be given by \cite{Gualtieri:2007bq,Coimbra:2011nw}
\be
\Th_D(X,Y,Z) := \la D_XY -D_YX, Z\ra_E + \la D_ZX,Y\ra_E - \la[X,Y]_E, Z\ra_E.
\ee
It can be checked that this is tensorial.
We can also define the Riemann curvature operator, first  introduced in \cite{Hohm:2012mf}. The formula is more involved
\bea
2 {\cal R}(X,Y,Z,W) &:=&
\la([D_X,D_Y]Z -D_{[X,Y]_E}Z),W\ra_E \cr
&+& \la([D_Z,D_W]X -D_{[Z,W]_E}X),Y\ra_E \cr
&+ &  \la D_a X, Y\ra_E \la D^a Z, W \ra_E
\eea 
where $D_a$ and $D^a$ denote the covariant derivative with respect to a basis and its dual respectively. It can be checked that this is a tensor.

A Courant algebroid can be equipped with a generalized metric $\HH$ which is usually chosen to be Euclidean. By definition, a generalized metric is characterised by the choice of a chiral structure $J$: $\HH(X,Y)=\la J(X),Y\ra_E$ where $J$ is a bundle isomorphism which is compatible with the pairing, $\la JX,JY\ra_E=\la X,Y\ra_E$ and squares to $\id$. 

Recall that the dual of the anchor map $\pi$ is an embedding map $ \pi^*: T^* M  \hookrightarrow \Tt M$. It turns out that the presence of a generalized metric $\HH$ on an exact Courant algebroid gives a canonical splitting $e: TM \hookrightarrow \Tt M$ of the exact sequence
\begin{align}
0\longrightarrow T^*M \overset{\pi^*}{\longrightarrow} \Tt M \overset{\pi}{\longrightarrow} TM \longrightarrow 0,
\end{align}
satisfying $\pi \circ e = \id$, $ \la e(x),e(y)\ra_E =0$, for $x,y \in TM$. In such splitting the metric $\HH$ is block diagonal, $\HH={\rm{diag}}(g,g^{-1})$, $g$ being a usual metric on $TM$. It is easy to check that this map exists and is unique.\footnote{It can be explicitly constructed using the dual of the anchor map $\pi^*:T^*M\to \Tt M$. First one uses that $g^{-1}= \pi J\pi^*: T^* M\to TM$ defines a metric and then that $e=  J \pi^* g$ has all the required properties \cite{StreetsJ}.}

Now that we have the generalized metric $\HH$ on $E$, we can construct the generalized ``Levi-Civita'' connection which preserves $\HH$ and is torsionless in the generalized sense. Unlike the case of usual geometry where there is a unique Levi-Civita connection, these conditions {\it do not} determine the generalized connection uniquely. The difference $\chi =\nabla -\nabla'$ between any two Levi-Civita connections is given by a tensor which is totally skew and chiral
\be
\chi(X,Y,Z) = \chi(X_+,Y_+,Z_+) +  \chi(X_-,Y_-,Z_-),
\ee
where $X_\pm =\tfrac12 (\id\pm J) X$. Although there is no unique Levi-Civita connection, one can resort to a preferred Levi-Civita connection instead. In the context of the exact Courant algebroid described above in (\ref{Standard}), equipped with the diagonal generalized metric $\HH= {\mathrm{diag}}(g,g^{-1})$, the  preferred Levi-Civita connection $D_x$ on $\Tt M$ is simply given by the Levi-Civita connection $\nabla^g$ of $g$ on $TM$ and $T^*M$
\be\label{connection}
D_x (y+\alpha) = \nabla^g_x y +  \nabla^g_x \alpha.
\ee 
This structure and its extension to the case of a non-standard Courant  algebroids was first revealed by Ellwood \cite{Ellwood:2006ya}.

\subsection{From generalized connections to the Born connection}

Given Theorem \ref{th:Dstructure-Calgebroid}, it should now be clear to the reader that there is a natural correspondence between para-Hermitian geometry and the Born connection on one side and the generalized Levi-Civita connection on the other. The goal of this section is to spell out this correspondence in detail and show how the background para-Hermitian structure on $\PS$ allows us to construct a generalized connection from a regular connection.

We assume here that $L$ is integrable. The Born connection has been explicitly defined in (\ref{eq:BornBracket}) in terms of the D-bracket and since the D-bracket naturally splits into the sum of an  $L$-bracket plus an $\Lt$-bracket we can therefore also split $ \nB_X = \DB_X +\DBt_X $ where $\DB_X$ involves only the $L$-bracket
\be
\DB_XY = (\bl X_-,Y_+ \br_{L})_+ + (\bl X_+,Y_-\br_{L})_- + (K\bl X_+,K(Y_+) \br_L )_{+} + (K\bl X_-,K(Y_-)\br_L)_{-} .
\label{eq:BornBracketL}
\ee
It can be checked that  $\DB$ is a generalized connection associated with the anchor $P$. It is obtained from the full Born connection by projecting the bracket and we will call it the projected Born connection. The proof that it satisfies the axioms (\ref{eq:gen-con-prop}) is similar to the proof  for the full Born connection. In addition one has to use that the $L$-bracket is constructed in terms of the partial connection $\nc_{PX}Y$ and satisfies the projected Leibniz identity (\ref{defining}).


By construction, if the full connection preserves some structures, the projected connection will preserve the same structures as well. Therefore, the projection of the Born connection is a partial connection that preserves $(\eta,\omega,\HH)$. A more subtle point is the fact that if the full connection is torsionless in the generalized sense then the projected connection also is torsionless as a generalized connection. We postpone the detailed analysis of this to a future publication \cite{Freidel:new}.

Remarkably but not surprisingly, 
the Born connection can be identified with the preferred Levi-Civita connection.  More precisely we have the following theorem: 
\begin{Thm}\label{th:Born-LCconnection}
The projected Born connection $\DB$ restricted to $L=T\ell$ is the Levi-Civita connection $\n^g$ for the metric $g: L \to L^*$ 
(where $g(x) = \eta (J x)$)
\be
\DB_x y = \n^g_xy, \qquad x,y \in L.
\ee
\end{Thm}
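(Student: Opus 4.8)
The plan is to invoke the uniqueness of the Levi-Civita connection: I will show that the restriction of $\DB$ to $x,y\in L$ is an affine connection on the leaf $T\ell=L$ that is compatible with $g$ and torsion-free, whence it must equal $\n^g$. First I would check that $\DB|_L$ is a genuine $L$-valued connection. As a generalized connection with anchor $P$, $\DB$ obeys $\DB_{fX}Y=f\DB_XY$ and $\DB_X(fY)=f\DB_XY+PX[f]Y$; for $x\in L$ one has $Px=x$, so these reduce to the ordinary axioms of a connection on $T\ell$. Since the projected connection inherits compatibility with $K$ from the full Born connection, $\DB_X(KY)=K(\DB_XY)$, and taking $Y=y\in L$ (so $Ky=y$) gives $\DB_xy=K(\DB_xy)$, i.e. $\DB_xy\in L$. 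Metric compatibility then follows structurally from $\DB\HH=0$: restricting $x[\HH(y,z)]=\HH(\DB_xy,z)+\HH(y,\DB_xz)$ to $x,y,z\in L$ and using the identification $\HH|_{L}=g$ (explicitly $g(x,y)=\eta(Jx,y)=\HH(x,y)$) together with $\DB_xy,\DB_xz\in L$ yields $x[g(y,z)]=g(\DB_xy,z)+g(y,\DB_xz)$.

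The crux is torsion-freeness, $\DB_xy-\DB_yx=[x,y]$ for $x,y\in L$. Since $\DB_xy-\DB_yx\in L$, it suffices to compute $\eta(\DB_xy-\DB_yx,\tilde z)$ against an arbitrary $\tilde z\in\Lt$ and match it with $\eta([x,y],\tilde z)$, using that $\eta$ pairs $L$ with $\Lt$ nondegenerately. I would feed $x,y\in L$ and $Z=\tilde z\in\Lt$ into the expanded form \eqref{eq:BornExpanded2} (with the bracket taken to be the $L$-bracket), exploiting the relations $Jx,Jy\in\Lt$, $Kx=x$, $Ix=Jx$ on $L$, and $J\tilde z\in L$, $K\tilde z=-\tilde z$, $I\tilde z=-J\tilde z$ on $\Lt$. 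The explicit $L$-bracket of Lemma \ref{Lbrac} (with $N=0$, since $L$ is integrable) then collapses the four lines of \eqref{eq:BornExpanded2} in pairs: one uses $\bl x,y\br_L=[x,y]$ and $\bl Jx,Jy\br_L=0$ (as $Jx,Jy\in\Lt$), while the mixed brackets $\bl x,Jy\br_L$ and $\bl Jx,y\br_L$ produce Lie-derivative and $\Db$ terms that are symmetric under $x\leftrightarrow y$ and hence cancel in the antisymmetric combination. What survives is exactly $\eta([x,y],\tilde z)$, giving $\DB_xy-\DB_yx=[x,y]$. Equivalently, this is the vanishing of the generalized torsion of $\DB$ -- obtained by the same Nijenhuis computation as in Section \ref{sec:proof} -- contracted against $\tilde z\in\Lt$, where the connection term $\eta(\DB_{\tilde z}x,y)$ drops because $\DB_{\tilde z}x\in L$ and $L$ is $\eta$-isotropic.

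With $\DB|_L$ shown to be a torsion-free, $g$-compatible connection on $T\ell$, uniqueness of the Levi-Civita connection forces $\DB_xy=\n^g_xy$, proving the theorem. The hard part is the torsion bookkeeping of the previous paragraph: one must carry the chiral decomposition of the $L$-vectors through the $L$-bracket and verify that all transverse contributions (the $\Lt$-valued Lie-derivative and $\Db$ terms) either cancel by antisymmetry in $x,y$ or reassemble into $[x,y]$. It is precisely here that the integrability of $L$ -- which makes $\bl x,y\br_L=[x,y]$ and $\bl Jx,Jy\br_L=0$ -- does the essential work, and this is also the reason the full (rather than projected) Born connection does not restrict cleanly to the leaf.
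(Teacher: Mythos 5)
Your proof is correct, but it takes a genuinely different route from the paper's. The paper does not verify the Levi-Civita axioms for $\DB|_L$; instead it computes the projected Born connection in chiral components, $\DB_x y = 2P(\bl x_-,y_+\br_{L+} + \bl x_+,y_-\br_{L-})$ with $x_\pm = \tfrac12[x\pm g(x)]$ and $Kx_\pm = x_\mp$, proves a separate lemma expressing the Levi-Civita connection of $g$ both by a Koszul-type formula $\n^g_xy = \tfrac12[x,y] + \tfrac12 g^{-1}(\Lie_x g(y) + \Lie_y g(x) - \rd g(x,y))$ and in terms of the standard Courant bracket as $\n^g_xy = 4\pi([x_-,y_+]_{\ell+})$, and then matches the two expressions under the isomorphism $\rho$ of Theorem \ref{th:Dstructure-Calgebroid}. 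Your argument --- values in $L$ via $\DB K = K\DB$, metric compatibility via $\DB\HH=0$ together with $\HH|_L = g$, torsion-freeness by antisymmetrizing the explicit $L$-bracket of Lemma \ref{Lbrac} against $\tilde z\in\Lt$, then uniqueness of the Levi-Civita connection --- checks out: feeding the chiral components into the $L$-bracket gives $4\bl x_\mp,y_\pm\br_L = [x,y] \pm \big(\Lie_x g(y)+\Lie_y g(x) - \Db\, g(x,y)\big)$, and the $\Lt$-valued part is indeed symmetric under $x\leftrightarrow y$ (note $\eta(y,Jx)=g(x,y)$ is symmetric), so it cancels and exactly $[x,y]$ survives. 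In effect you re-derive the content of the paper's Koszul lemma and consume it through the uniqueness theorem rather than through the bracket identity \eqref{eq:LCofg}; your route is more elementary (no separate lemma about $\n^g$, no explicit matching through $\rho$), while the paper's route buys the chiral-bracket formula for $\n^g$ as a by-product, which makes visible the kinship with Gualtieri's Bismut-type construction mentioned after Theorem \ref{th:BornConnection}.

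One caution about your closing ``equivalently'' remark: the vanishing of the generalized torsion of the \emph{projected} connection $\DB$ is precisely the point the paper flags as subtle and defers to \cite{Freidel:new}; it is not available by simply repeating the Nijenhuis computation of Section \ref{sec:proof}, which was carried out for the full connection $\nB$ and the full D-bracket. Nor can you substitute the full connection's torsionlessness here, since for $x,y\in L$ one has $\nB_xy = \DB_xy + \DBt_xy$, and the $\Lt$-bracket contribution $\DBt_xy$ does not vanish in general (the chiral components $x_\pm$ carry nonzero $\Lt$-parts $\pm g(x)$, so $\Lt$-derivatives of $g$ enter) --- this is exactly your own correct observation about why the full connection does not restrict cleanly to the leaf. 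Since your primary, direct computation does not rely on this aside, the proof stands, but the aside should be deleted or explicitly demoted to an unproven remark.
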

To show this we will use the chiral projections $P_\pm$ of $x=PX\in L$: $x_\pm := \tfrac12 [x \pm g(x)]$. Then one can show that $K x_\pm = x_\mp$ (since $K$ anticommutes with $J$) and hence starting from the projected Born connection defined in  \eqref{eq:BornBracketL} we find for $y=PY$
\be
\DB_xy  = 2 P(\bl x_-,y_+ \br_{L+} + \bl x_+,y_-\br_{L -}).
\label{eq:BornonL}
\ee
To see that this is equal to the Levi-Civita connection of $g$, we need the following lemma:
\begin{Lem}
Given a metric $g$ on $L=T\ell$, viewed as a map $ g: L \to L^*$, we can express the  Levi-Civita connection of $g$ as 
\be
\n^g_x y =\tfrac12 [x,y]+ \tfrac12g^{-1} \left(\mathcal{L}_x g(y) +  \mathcal{L}_y g(x) - \rd g(x,y) \right), 
\ee
where $x,y \in L$ and $\mathcal{L}_x $ is the Lie derivative on $L^*$. This expression can equivalently be written in terms of the standard Courant bracket $[\,,\,]_\ell$ and its anchor $\pi$ as 
\be
 \n^g_x y = 4\pi([x_-,y_+]_{\ell +}) = 4\pi ([x_+,y_-]_{\ell -}).
 \label{eq:LCofg}
\ee  
\end{Lem}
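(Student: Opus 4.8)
The plan is to establish the first (Koszul-type) formula directly from the ordinary Levi-Civita theory and then to verify that the Courant-bracket expression reproduces it. For the first formula I would start from the standard Koszul identity for the Levi-Civita connection $\n^g$ of $g$ on $T\ell$,
\[
2g(\n^g_x y, z) = x[g(y,z)] + y[g(x,z)] - z[g(x,y)] + g([x,y],z) - g([x,z],y) - g([y,z],x),
\]
and rewrite the right-hand side using the defining relation for the Lie derivative of a one-form, $(\mathcal{L}_x g(y))(z) = x[g(y,z)] - g(y,[x,z])$, together with $\rd(g(x,y))(z) = z[g(x,y)]$. Since $g$ is symmetric, the terms $-g([x,z],y)$ and $+g(y,[x,z])$ cancel, as do $-g([y,z],x)$ and $+g(x,[y,z])$; what survives is exactly $g([x,y],z) + \big(\mathcal{L}_x g(y) + \mathcal{L}_y g(x) - \rd g(x,y)\big)(z)$. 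Applying $g^{-1}$ then yields the stated closed form.

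For the Courant-bracket expression I would compute $[x_-,y_+]_\ell$ explicitly from the Dorfman bracket \eqref{Standard}. Writing $x_\mp = \tfrac12(x \mp g(x))$ and using bilinearity gives a vector part $\tfrac14[x,y]$ and a one-form part $\tfrac14\big(L_x g(y) + \imath_y \rd g(x)\big)$. Next I would use that in the splitting $\Tt\ell = T\ell\oplus T^*\ell$ with $\HH = \mathrm{diag}(g,g^{-1})$ the chiral structure acts as $J(v+\alpha) = g^{-1}\alpha + g(v)$, so that the anchor of the $C_+$-projection is $\pi(P_+(v+\alpha)) = \tfrac12(v + g^{-1}\alpha)$. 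Feeding in the vector and one-form parts above and multiplying by $4$ produces $\tfrac12[x,y] + \tfrac12 g^{-1}\big(L_x g(y) + \imath_y \rd g(x)\big)$.

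The two expressions are then reconciled by Cartan's magic formula: since $g$ is symmetric, $\mathcal{L}_y g(x) = \imath_y \rd g(x) + \rd(\imath_y g(x)) = \imath_y \rd g(x) + \rd(g(x,y))$, whence $\mathcal{L}_y g(x) - \rd g(x,y) = \imath_y \rd g(x)$ and therefore $\mathcal{L}_x g(y) + \mathcal{L}_y g(x) - \rd g(x,y) = L_x g(y) + \imath_y \rd g(x)$. This identifies the Courant-bracket expression with the Koszul form. The identical computation, applying $P_-$ to $[x_+,y_-]_\ell$, yields the same result (the sign flips in both the one-form part and the projector cancel), establishing the last equality and hence the chain $\n^g_x y = 4\pi([x_-,y_+]_{\ell+}) = 4\pi([x_+,y_-]_{\ell-})$.

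The main obstacle I anticipate is purely bookkeeping: getting the chiral projectors $P_\pm$ right in the $T\ell\oplus T^*\ell$ splitting, so that the anchor of the projection carries the correct $g^{-1}$ and the correct factor of $\tfrac12$, and tracking the accumulated factors of $\tfrac14$ coming from the two halves of $x_\mp$ so that the overall normalisation $4$ lands correctly. The conceptual content -- the cancellation of the symmetric terms in the Koszul formula and the single application of Cartan's identity -- is short; the real risk is a misplaced factor in either the projector or the anchor.
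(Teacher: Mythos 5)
Your proof is correct and takes essentially the same route as the paper's (very terse) proof: checking that the Koszul-type formula is the Levi-Civita connection and then identifying it with the anchored chiral projection of the standard Courant (Dorfman) bracket, with your Cartan-formula step playing exactly the role of the $\Db\,\eta(x,\tilde y)$ term in the paper's explicit $L$-bracket formula \eqref{bracketexpl}. All normalisations check out ($4\cdot\tfrac12\cdot\tfrac14=\tfrac12$, and the sign flips in the one-form part and in $P_-$ indeed cancel), so nothing is missing.
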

The proof of this lemma is mechanical: for the first part one just checks that the expression written is a connection which is torsionless and preserves $g$; for the second part one just relates the first expression to the projection of the Courant bracket (cf. \eqref{bracketexpl} with $N=0$ since $L$ is integrable).

Now the result \eqref{eq:LCofg} can be written as $\n^g_x y = 2\pi([x_-,y_+]_{\ell +}+[x_+,y_-]_{\ell -})$ which is equal to \eqref{eq:BornonL} via the $\rho$-isomorphism \eqref{eq:D-brac-dorfman} and the correspondence $P = \pi\rho$. We thus have $\DB_x y = \n^g_xy$, completing the proof of the theorem.

\section{Conclusions and Discussion}

In the spirit of ``physics is geometry'', we have have explored structures in para-Hermitian geometry to expand the idea first formulated in \cite{Freidel:2015pka}  and further developped in \cite{Freidel:2017yuv} that spacetime is a Lagrangian subspace of a para-Hermitian manifold (see also \cite{Chatzistavrakidis:2018ztm} for a related implementation). We have introduced the notion of a D-structure which generalizes the notion of a differentiable structure -- usually encoded in the Lie bracket of a Lie algebra -- for this kind of geometry. The D-bracket (and its canonical version in terms of the canonical connection) allowed us to define generalizations of torsion and integrability. 

Born geometry, which is a para-Hermitian manifold with an additional dynamical metric, is the natural arena for the T-duality symmetric string \cite{Freidel:2014qna,Freidel:2015pka}. The D-structure together with the tools developed were then used to find a unique, torsionless connection for Born geometry (Theorem \ref{th:BornConnection}). This should be seen as a natural extension of the fundamental theorm of Riemannian geometry and the Levi-Civita connection to Born geometry.

We also discussed the relation of the D-structure to generalized geometry and the notion of a Courant algebroid (Theorem \ref{th:Dstructure-Calgebroid})  and the correspondence of the Born connection with the generalized Levi-Civita connection (Theorem \ref{th:Born-LCconnection}). It is particularly satisfactory to see that the Born connection projected onto the spacetime Lagrangian reduces to the ordinary Levi-Civita connection. This matches the fact that the structure group of Born geometry, i.e. the intersection of the three structure groups of the defining elements $(\eta,\omega,\HH)$ as in \eqref{eq:structuregroup}, is the Lorentz group $O(d)$ which is the structure group of the spacetime metric $g$. From this perspective the Born connection appears as a natural extension of the Levi-Civita connection to the doubled space and the Born geometry as the geometrical structure needed to define this extension.

An important feature of Born geometry that has not been fully explored in our work is the notion of fluxes appearing in the physics literature. We believe that the framework of Born geometry is the correct natural setting to relate the fluxes to geometrical structures and their various integrability properties. This will be discussed in the future \cite{Freidel:new} in the context of more general D-structures, i.e. ones where the D-bracket is allowed to be twisted. The key idea is to relate the D-bracket associated with non-integrable Lagrangians to generalized brackets carrying non-trivial flux. Evidence in favour of this mechanism follows from the result shown in \cite{Svoboda:2018rci}: given a para-Hermitian structure $(\PS,\eta,K)$ with D-bracket $\bracd$, we assume that there exists a reference para-K\"ahler structure $(\PS,\eta,K')$ with D-bracket $\bracd'$, then the D-bracket $\bracd$ of $K$ appears as \textit{twisted} with respect to $\bracd'$
\begin{align}
\bracd=\bracd'+\mathcal{F},
\end{align}
where $\mathcal{F}$ are the \textit{generalized fluxes} appearing in the DFT literature. From this point of view, the reference choice of integrable $K'$ serves as a choice of frame in which we express the fluxes. The interplay between the relativity of  D-structures $(\PS,\eta,K,\bracd)$ and $(\PS,\eta,K',\bracd')$  and the fluxes  will be explored further in the future.

Another limitation of our work is the fact that we have not yet included the dilaton in the construction of the bracket and the connection. Finding the proper geometrical setting that generalizes our result to the case of a non-trivial dilaton is a central challenge of future research in this direction.

Finally, given the almost para-quaternionic structure $(I,J,K)$ induced by the Born geometry, the Born connection is also very relevant to the geometry of para-quaternionic manifolds. Indeed, similar connections have been explored in this context in \cite{Ivanov:2003ze} and a special class of such connections that have non-vanishing torsion related to integrability of the almost para-quaternionic structure via Nijenhus tensors have been explored \cite{Ivanov:2004yc}. Making the relationship with the Born connection precise is an interesting question for future research.

\section*{Acknowledgements} 
F.J.R. would like to thank Chris Blair for useful discussions. L.F. would like to thank long-time collaborator D. Minic for inputs and encouragements. D.S. would like to thank his co-supervisors Ruxandra Moraru and Shengda Hu for their supervision.

The work of F.J.R. is supported by DFG grant TRR33 ``The Dark Universe''. D.S. is currently a PhD student at Perimeter institute and University of Waterloo. His research is supported by NSERC Discovery Grants 378721.

This research was supported in part by Perimeter Institute for Theoretical Physics. Research at Perimeter Institute is supported by the Government of Canada through Innovation, Science and Economic Development Canada and by the Province of Ontario through the Ministry of Research, Innovation and Science.

\appendix
\section{Further Proofs}
\label{sec:appendix}
In this appendix we provide the detailed steps for the proofs of some statements made in the main text.

\subsection{Proof of Corollary \ref{cor:BornContorsion}}
In Corollary \ref{cor:BornContorsion} the Born connection is expressed in terms of the canonical connection and the contorsion tensor. Here we want to derive this expression in detail. We start from the form \eqref{eq:BornExpanded2} which reads
\begin{equation}
\begin{aligned}
4\eta(\nB_XY,Z)&= 
	\eta\big( \bl X,Y \br - \bl J(X),J(Y) \br, Z\big) \\
	&\qquad + \eta\big( \bl X,J(Y) \br - \bl J(X),Y \br, J(Z)\big) \\
	&\qquad - \eta\big( \bl X,K(Y) \br - \bl J(X),I(Y) \br, K(Z)\big) \\
	&\qquad - \eta\big( \bl X,I(Y) \br - \bl J(X),K(Y) \br, I(Z)\big)   .
\end{aligned}
\label{eq:proofcor1}
\end{equation}
Expanding the brackets in the first two lines gives
\begin{align}
=&\eta(\nc_XY - \nc_YX - \nc_{JX}JY + \nc_{JY}JX , Z) + \eta(\nc_ZX,Y) - \eta(\nc_ZJX,JY) \notag\\
&+\eta(\nc_XJY - \nc_{JY}X - \nc_{JX}Y + \nc_YJX , JZ) + \eta(\nc_{JZ}X,JY) - \eta(\nc_{JZ}JX,Y) \\
=& 2\eta(\nc_XY,Z) - 2\eta(\nc_{JX}Y,JZ) - \eta([J\nc_{JX}J]Y-[J\nc_{JY}J]X,JZ) - \eta([J\nc_ZJ]X,Y) \notag\\
 &+ \eta([J\nc_XJ]Y+[J\nc_YJ]X,Z) - \eta([J\nc_{JZ}J]X,JY) 
\end{align}
where we expanded terms like $\eta(\nc_XJY,Z)$ as $\eta(\nc_XY,JZ)+\eta([J\nc_XJ]Y,JZ)$ using $\eta(JX,Y)=\eta(X,JY)$. We can now use
\begin{equation}
\eta([J\nc_XJ]Y,Z) = \nc_X\HH(Y,JZ)
\end{equation}
to convert the last six terms into derivatives acting on $\HH$
\begin{align}
=& 2\eta(\nc_XY,Z) - 2\eta(\nc_{JX}Y,JZ) + \nc_X\HH(Y,JZ) + \nc_Y\HH(X,JZ) \notag\\
&- \nc_{JX}\HH(Y,Z) + \nc_{JY}\HH(X,Z) - \nc_Z\HH(X,JY) - \nc_{JZ}\HH(X,Y) .
\end{align}
The third and fourth line in \eqref{eq:proofcor1} can be obtained from the first two by the replacement $(X,Y,Z)\rightarrow(X,KY,KZ)$ and an overall minus sign. These two lines then read (recall $I=JK$)
\begin{align}
=&  -\big[ 2\eta(\nc_XKY,KZ) - 2\eta(\nc_{JX}KY,IZ) + \nc_X\HH(KY,IZ) + \nc_{KY}\HH(X,IZ) \notag\\
&\quad- \nc_{JX}\HH(KY,KZ) + \nc_{IY}\HH(X,KZ) - \nc_{KZ}\HH(X,IY) - \nc_{IZ}\HH(X,KY) \big]  .
\end{align}
In a final step we use $\eta(KX,KY)=-\eta(X,Y)$, $\nc K = K\nc$ and
\begin{align}
\nc_X \HH(KY,Z)-\nc_X \HH(Y,KZ)&= 0 \\
\nc_X \HH(JY,Z)+\nc_X \HH(Y,JZ)&= 0 \\
\nc_X\Hh(IY,Z)-\nc_X \HH(Y,IZ)&= 0
\end{align}
to get some more cancellations and combine everything into
\begin{align}
4\eta(\nB_XY,Z)&= 4\eta(\nc_XY,Z) + 2\nc_X \HH(Y,JZ) \\
&\qquad + \Big[\nc_Y\Hh(JZ,X) + \nc_{JY}\Hh(Z,X) - \nc_{KY}\Hh(IZ,X) - \nc_{IY}\Hh(KZ,X)\Big] \notag\\
&\qquad - \Big[\nc_Z\Hh(JY,X)  + \nc_{JZ}\Hh(Y,X) -\nc_{KZ}\Hh(IY,X) - \nc_{IZ}\Hh(KY,X) \Big] \notag
\end{align}
which is the desired result for the contorsion of the Born connection given in Corollary \ref{cor:BornContorsion},  \eqref{eq:BornContorsion}.

\subsection{Proof of Lemma \ref{lem:LemmaForTorsion}}
In Section \ref{sec:proof} the generalized torsion of the Born connection is expressed in terms of the generalized Nijenhuis tensor which subsequently vanishes. This computation uses a lemma which states
\begin{align}
\eta(\bl X_\mp,Y_\pm\br,Z_\pm) &= - \eta(\bl Y_\pm,X_\mp\br,Z_\pm) \\
\eta(\bl X_\pm,Y_\pm\br,Z_\mp) &= \eta(\bl Z_\mp,X_\pm\br,Y_\pm) .
\end{align}
and which we will prove here. Expanding the LHS of the first expression with the upper sign reads
\begin{equation}
\begin{aligned}
\eta(\bl X_-,Y_+\br,Z_+) &= \eta(\nc_{X_-}Y_+ - \nc_{Y_+}X_-,Z_+) + \eta(\nc_{Z_+}X_-,Y_+) \\
	&= -[\eta(\nc_{Y_+}X_- - \nc_{X_-}Y_+,Z_+) + \eta(\nc_{Z_+}Y_+,X_-)] \\
	&\qquad + \eta(\nc_{Z_+}Y_+,X_-) + \eta(\nc_{Z_+}X_-,Y_+) \\
	&= -\eta(\bl Y_+,X_-\br,Z_+) + Z_+[\eta(X_-,Y_+)] - (\nc_{Z_+}\eta)(X_-,Y_+) \\
	&= -\eta(\bl Y_+,X_-\br,Z_+).
\end{aligned}
\end{equation}
In the final step the last two terms vanish since $\eta(X_-,Y_+)=0$ as in \eqref{eq:Cpm-orthogonal-property} and $\nc\eta=0$ by the metric compatibility of the canonical connection. The same of course holds for the opposite sign choice which thus proves the first part of the lemma.

Similarly expanding the LHS of the second expression with the upper sign reads
\begin{equation}
\begin{aligned}
\eta(\bl X_+,Y_+\br,Z_-) &= \eta(\nc_{X_+}Y_+ - \nc_{Y_+}X_+,Z_-) + \eta(\nc_{Z_-}X_+,Y_+) \\
	&= \eta(\nc_{X_+}Y_+ - \nc_{Y_+}X_+,Z_-) \\
	&\qquad + [\eta(\nc_{Z_-}X_+ - \nc_{X_+}Z_-,Y_+) + \eta(\nc_{Y_+}Z_-,X_+)] \\
	&\qquad + \eta(\nc_{X_+}Z_-,Y_+) - \eta(\nc_{Y_+}Z_-,X_+) \\
	&= \eta(\bl Z_-,X_+\br,Y_+) + X_+[\eta(Y_+,Z_-)] - \nc_{X_+}\eta(Y_+,Z_-) \\
	&\qquad - Y_+[\eta(X_+,Z_-)] + \nc_{Y_+}\eta(X_+,Z_-) \\
	&= \eta(\bl Z_-,X_+\br,Y_+)
\end{aligned}
\end{equation}
where in the final step terms vanish for the same reasons as above. Again this also holds for the opposite sign choice thus proving the second part of the lemma.

\subsection{Proof of the Jacobi Identity}
In this section we prove the Jacobi identity for the $L$-bracket
\be
\bl x+\tilde{x} ,y+\tilde{y} \br_L =
[x,y] + \Lie_x \yt - \Lie_y \xt + \Db \eta(x,\yt)
\ee
in the case $L$ is integrable.
We define the Jacobiator to be 
\be 
J(X,Y,Z):= \bl \bl X,Y\br_L, Z\br_L-
\bl X, \bl Y, Z\br_L\br_L 
 + 
 \bl Y, \bl X, Z\br_L\br_L.
\ee
One first looks at the symmetry property of the bracket and establishes that 
\be
\bl \Db f , y+ \yt \br_L =0,\qquad 
\bl x+\xt , \Db f  \br_L = \Db x[f].
\ee
which follows from the evaluations
\begin{equation}
\begin{aligned}
\eta(\bl \Db f , y+ \yt \br_L, z) &= - \eta(L_y \Db f,z) + z[y[f]] \\
  &= - y[ z[f]] + [y,z][f] + z[y[f]] =0, \\
\eta(\bl x+\xt , \Db f  \br_L, z) &=  \eta(\Lie_x \Db f, z) \\
&= x[z[f]]- \eta(\Db f , [x,z]) = z[x[f]].
\end{aligned}
\end{equation}
Now we look at symmetric components of the Jacobiator
\begin{equation}
\begin{aligned}
J(X,X,Z)&:= \bl \bl X,X\br_L, Z\br_L = \bl \Db\eta(x,\xt),Z\br_L=0,\\
J(X,Y,Y)&:= \bl \bl X,Y\br_L, Y\br_L-\bl X, \bl Y, Y\br_L\br_L +  \bl Y, \bl X, Y\br_L\br_L \\
 &= \Db\eta(\bl X, Y\br_L, Y) - \tfrac12 \Db x[\eta(Y,Y)] =0.
\end{aligned}
\end{equation}
The polarised Jacobiator component $J(x,y,z)$ is simply given by
\bea
J(x,y,z) =[ [x,y], z] -[x,[y,z]] +[y,[x,z]]=0.
\eea
Next we show that the Lie derivative is a morphism
\begin{equation}
\begin{aligned}
\eta( \Lie_x  \Lie_y \zt , w) &= x[\eta(   \Lie_y \zt , w)]- \eta(   \Lie_y \zt , [x,w])\\
 &= x[y[\eta( \zt , w)]- \eta(\zt, [y,w])]- \eta(   \Lie_y \zt , [x,z])\\
 &= x[y[\eta( \zt , w)]]- x[\eta(\zt, [y,w])]- y\eta(   \Lie_y \zt , [x,z]) +\eta(  \eta\zt, [y,[x,z]).
\end{aligned}
\end{equation}
Taking the skew combination therefore gives
\begin{equation}
\begin{aligned}
\eta( [\Lie_x, \Lie_y] \zt , w) &= [x,y][\eta( \zt , w)]+ \eta( \eta\zt, [y,[x,z]-[x,[y,z])\\
 &= [x,y][\eta( \zt , w)]- \eta( \eta\zt,[[x,y,z]]) = \Lie_{[x,y]} \zt .
\end{aligned}
\end{equation}
Finally, the Jacobiator with two elements in $\Lt$ vanishes since $\Lt$ is abelian. For instance
\begin{equation}
\begin{aligned}
 J(x,\yt,\zt) &:= \bl \bl x,\yt \br_L, \zt \br_L-\bl x, \bl \yt , \zt \br_L\br_L + \bl \yt , \bl x, \zt \br_L\br_L \\
 &= \bl \Lie_x\yt, \zt \br_L + \bl \yt , \Lie_x \zt \br_L =0.
\end{aligned}
\end{equation}
This completes the proof.

\bibliographystyle{JHEP} 
\bibliography{mybib}

\end{document}